\documentclass[11pt,letterpaper,draftcls,oneside,peerreview,onecolumn]{IEEEtran}
\usepackage{amsfonts}
\usepackage{amssymb}
\usepackage{amsmath}
\usepackage[dvips]{graphicx}
\usepackage{cite}
\usepackage{hyperref}
\usepackage[left=0.8in,top=1.0in,bottom=1.2in,right=0.8in]{geometry}

\setcounter{MaxMatrixCols}{10}

\newtheorem{theorem}{Theorem}

\newtheorem{corollary}{Corollary}

\newtheorem{lemma}[theorem]{Lemma}

\newtheorem{proposition}{Proposition}

\begin{document}

\title{Protocols for Relay-Assisted Free-Space Optical Systems}
\author{Nestor~D.~Chatzidiamantis,~\IEEEmembership{Student~Member,~IEEE,}
~Diomidis~S.~Michalopoulos, \IEEEmembership{Member,~IEEE,}~Emmanouil E.
Kriezis,~\IEEEmembership{Member,~IEEE,}~George K. Karagiannidis, %
\IEEEmembership{Senior~Member,~IEEE,}~and~Robert Schober,~%
\IEEEmembership{Fellow,~IEEE}\thanks{%
This paper has been submitted in part to the IEEE Global Communications
Conference (GLOBECOM'11).}\thanks{%
N. D. Chatzidiamantis, E. E. Kriezis, and G. K. Karagiannidis are with the
Department of Electrical and Computer Engineering, Aristotle University of
Thessaloniki, GR-54124 Thessaloniki, Greece (E-mail: \{nestoras, mkriezis,
geokarag\}@auth.gr).} \thanks{%
D. S. Michalopoulos and R. Schober are with the Department of Electrical and
Computer Engineering, The University of British Columbia, Vancouver, BC V6T
1Z4, Canada (E-mail: \{dio, rschober\}@ece.ubc.ca).}}
\pubid{}
\specialpapernotice{}
\maketitle

\begin{abstract}
We investigate transmission protocols for relay-assisted free-space optical
(FSO) systems, when multiple parallel relays are employed and there is no
direct link between the source and the destination. As alternatives to
all-active FSO relaying, where all the available relays transmit
concurrently, we propose schemes that select only a single relay to
participate in the communication between the source and the destination in
each transmission slot. This selection is based on the channel state
information (CSI) obtained either from all or from some of the FSO links.
Thus, the need for synchronizing the relays' transmissions is avoided and
the slowly varying nature of the atmospheric channel is exploited. For both
relay selection and all-active relaying, novel closed-form expressions for
their outage performance are derived, assuming the versatile Gamma-Gamma
channel model. Furthermore, based on the derived analytical results, the
problem of allocating the optical power resources to the FSO links is
addressed, and optimum and suboptimum solutions are proposed. Numerical
results are provided for equal and non-equal length FSO links, which
illustrate the outage behavior of the considered relaying protocols and
demonstrate the significant performance gains offered by the proposed power
allocation schemes.
\end{abstract}

\begin{keywords}
Atmospheric turbulence, cooperative diversity, distributed switch and stay
relaying, free-space optical communications, relay-assisted communications,
relay selection, power allocation.
\end{keywords}

\markboth{Submitted to the \textit{IEEE Transactions on
Communications}}
{Murray and Balemi: Using the Document Class IEEEtran.cls}%
\setcounter{page}{0}\newpage

\section{Introduction}

The constant need for higher data rates in support of high-speed
applications has led to the development of the Free Space Optical (FSO)
communication technology. Operating at unlicensed optical frequencies, FSO
systems offer the potential of broadband capacity at low cost \cite%
{B:Willebrand}, and therefore, they present an attractive remedy for the
"last-mile" problem. However, despite their major advantages, the widespread
deployment of FSO systems is hampered by major impairments, which have their
origin in the propagation of optical signals through the atmosphere. Rain,
fog, and atmospheric turbulence are some of the major atmospheric phenomena
that cause attenuation and rapid fluctuations in the received optical power
in FSO systems, thereby increasing the error rate and severely degrading the
overall performance \cite{B:Andrews}.

In the past, several techniques have been applied in FSO systems for
mitigating the degrading effects of the atmospheric channel, including error
control coding in conjunction with interleaving \cite{J:Kah4},
multiple-symbol detection \cite{J:SchoberTCOM}, and spatial diversity \cite%
{J:Lee,J:Navid,J:Wilson_FSO_MIMO}. Among these techniques, spatial
diversity, which is realized by deploying multiple transmit and/or receive
apertures, has been particularly attractive, since it offers significant
performance gains by introducing additional degrees of freedom in the
spatial dimension. Thus, numerous FSO systems with multiple co-located
transmit and/or receive apertures, referred as Multiple-Input
Multiple-Output (MIMO) FSO systems, have been proposed in the technical
literature \cite{J:Lee,J:Navid,J:Wilson_FSO_MIMO}. However, in practice,
MIMO FSO systems may not always be able to offer the gains promised by
theory. This happens in cases where the assumption that all the links of the
MIMO FSO system are affected by independent channel fading becomes invalid
\cite{J:Wilson_FSO_MIMO}. Furthermore, since both the path loss and the
fading statistics of the channel are distance-dependent, a large number of
transmit and/or receive apertures is required in long-range links in order
to achieve the desired performance gains, thus increasing the complexity of
MIMO FSO systems.

In order to overcome such limitations, relay-assisted communication has been
recently introduced in FSO systems as an alternative approach to achieve
spatial diversity \cite{J:UysalRelays,J:Kamiri,J:Kamiri2,J:Rjeily}. The main
idea lies in the fact that, by employing multiple relay nodes with
line-of-sight (LOS) to both the source and the destination, a virtual
multiple-aperture system is created, often referred as cooperative diversity
system, even if there is no LOS between the source and the destination. In
\cite{J:UysalRelays}, various relaying configurations (cooperative diversity
and multihop) have been investigated under the assumption of a lognormal
channel model. Subsequently, several coding schemes for 3-way cooperative
diversity FSO systems with a single relay and a direct link between the
source and the destination have been proposed in \cite{J:Kamiri}, while the
performance of such systems has been investigated in \cite{J:Kamiri2} and
\cite{J:Rjeily} assuming amplify-and-forward and decode-and-forward relaying
strategies, respectively. It is emphasized that in all these previous works,
all the available relays participated in the communication between the
source and the destination, requiring perfect synchronization between the
relays such that the FSO signals can arrive simultaneously at the
destination, while the optical power resources are equally divided between
all FSO links.

In view of the above, in this paper we present alternative transmission
protocols which can be applied to relay-assisted FSO systems with no LOS
between the source and the destination. For the signaling rates of interest,
the atmospheric channel does not vary within one packet. Thus, channel state
information (CSI) can be easily obtained for all or for some of the involved
links. Capitalizing on this fact, the presented protocols select only a
single relay to take part in the communication in every transmission slot,
thus avoiding the need for synchronization between the relays. It should be
noted that similar relay selection protocols have been also proposed in the
context of radio-frequency relaying systems \cite{J:KarRel,J:Dio,J:Dio2}. In
particular, two types of relay selection protocols are presented: the
select-max protocol that selects the relay that maximizes an appropriately
defined metric and requires CSI from all the available FSO links, and the
distributed switch and stay (DSSC) protocol which switches between two
relays and requires CSI only from the FSO links used in the previous
transmission slot. Furthermore, assuming the versatile Gamma-Gamma channel
model \cite{B:Andrews} and decode-and-forward relay nodes, we derive novel
closed-form analytical expressions for the outage performance of the
proposed transmission schemes, as well as the scheme where all the available
relays transmit simultaneously; thus, extending the analysis presented in
\cite{J:UysalRelays} to the case of the Gamma-Gamma channel model. Finally,
based on the derived outage results, we address the problem of optimizing
the allocation of the optical power resources to the FSO links for
minimization of the probability of outage; hence, rendering the
relay-assisted FSO system under consideration more power efficient.

The remainder of the paper is organized as follows. In Section \ref{SM}, the
system model and the considered relaying protocols are discussed. The outage
performance of the relaying protocols under investigation is analyzed in
Section \ref{OA}, while the problem of optimizing the allocation of the
optical power resources to the FSO links is addressed in Section \ref{OPA}.
Numerical results for various relay-assisted FSO architectures are presented
in Section \ref{NR}, and, finally, concluding remarks are provided in
Section \ref{Con}.

\section{System Model\label{SM}}

The system model under consideration is depicted in Fig. \ref{Fig:system}.
In particular, we consider an intensity-modulation direct detection (IM/DD)
FSO system without LOS between the source, $S$, and the destination, $D$,
and the communication between these two terminals is achieved with the aid
of multiple relays, denoted by $R_{i}$, $i\in \left\{ 1,...,N\right\} $. The
source node is equipped with a multiple-aperture transmitter, with each of
the apertures pointing in the direction of the corresponding relay, and an
optical switch\footnote{%
Optical switches can be implemented with either spatial light modulators
(SLM) \cite[Ch. (27)]{B:Kriezis} or optical MEMS devices \cite{J:mems}.},
which either allows the simultaneous transmission from all the transmit
apertures or selects the direction of transmission by switching between the
transmit apertures.

The presence of a large field-of-view (FOV) detector at the destination is
assumed allowing for the simultaneous detection of the optical signals
transmitted from each relay. Moreover, all optical transmitters are equipped
with optical amplifiers that adjust the optical power transmitted in each
link. The relaying terminals use a threshold-based decode-and-forward (DF)
protocol; that is, they fully decode the received signal and retransmit it
to the destination only if the signal-to-noise ratio (SNR) of the receiving
FSO link exceeds a given decoding threshold. Finally, throughout this paper,
we assume that binary pulse position modulation (BPPM) is employed.

\subsection{Signal and Channel Model}

For an FSO link connecting two terminals $A$ and $B$, the received optical
signal at the photodetector of $B$ is given by%
\begin{equation}
\mathbf{r}_{B}=\left[
\begin{array}{c}
r^{s} \\
r^{n}%
\end{array}%
\right] =\left[
\begin{array}{c}
\eta T_{b}\left( \rho _{AB}P_{t}h_{_{AB}}+P_{b}\right) +n^{s} \\
\eta T_{b}P_{b}+n^{n}%
\end{array}%
\right]  \label{rec_signal}
\end{equation}%
where $r^{s}$ and $r^{n}$ represent the signal and the non-signal slots of
the BPPM symbol, respectively, while $\rho _{AB}P_{t}$ and $P_{b}$ denote
the average optical signal power transmitted from $A$ and the background
radiation incident on the photodetector of $B$, respectively. Furthermore, $%
\rho _{AB}$ represents the percentage of the total optical power $P_{t}$
allocated to the FSO link between terminals $A$ and $B$, $h_{_{AB}}$ is the
channel gain of the link, $\eta $ is the photodetector's responsivity, $%
T_{b} $ is the duration of the signal and non-signal slots, and $n^{s}$ and $%
n^{n}$ are the additive noise samples in the signal and non-signal slots,
respectively. Since background-noise limited receivers are assumed, where
background noise is dominant compared to other noise components (such as
thermal, signal dependent, and dark noise) \cite{J:Lee,J:Kah1}, the noise
terms can be modeled as additive white Gaussian, with zero mean and variance
$\sigma _{n}^{2}=\frac{N_{0}}{2}$. After removing the constant bias $\eta
T_{b}P_{b}$ from both slots, the instantaneous SNR of the link can be
defined as \cite{J:UysalRelays}%
\begin{equation}
\gamma _{_{AB}}=\frac{\eta ^{2}\rho _{AB}^{2}T_{b}^{2}P_{t}^{2}h_{_{AB}}^{2}%
}{N_{0}}.  \label{inst_SNR}
\end{equation}

Due to atmospheric effects, the channel gain of the FSO link under
consideration can be modeled as%
\begin{equation}
h_{_{AB}}=\bar{h}_{_{AB}}\tilde{h}_{_{AB}}  \label{chan_state}
\end{equation}%
where $\bar{h}_{AB}$ accounts for path loss due to weather effects and
geometric spread loss and $\tilde{h}_{_{AB}}$ represents irradiance
fluctuations caused by atmospheric turbulence. Both $\bar{h}_{_{AB}}$ and $%
\tilde{h}_{_{AB}}$ are time-variant, yet at very different time scales. The
path loss coefficient varies on the order of hours while turbulence induced
fading varies on the order of 1--100 ms \cite{J:Lee}. Thus, taking into
consideration the signaling rates of interest, which range from hundreds to
thousands of Mbps, the channel gain can be considered as constant over a
given transmission slot, which consists of hundreds of thousands (or even
millions) of consecutive symbols.

The path loss coefficient can be calculated by combining the Beer Lambert's
law \cite{B:Andrews} with the geometric loss formula \cite[pp. 44]%
{B:Willebrand}, yielding%
\begin{equation}
\bar{h}_{_{AB}}=\frac{D_{R}^{2}}{\left( D_{T}+\theta _{T}d_{_{AB}}\right)
^{2}}\exp \left( -vd_{_{AB}}\right)  \label{geom_loss}
\end{equation}%
where $D_{R}$ and $D_{T}$ are the receiver and transmitter aperture
diameters, respectively; $\theta _{T}$ is the optical beam's divergence
angle (in $m$rad), $d_{_{AB}}$ is the link's distance (in km), and $v$ is
the weather dependent attenuation coefficient (in 1/km).

Under a wide range of atmospheric conditions, turbulence induced fading can
be statistically characterized by the well-known Gamma-Gamma distribution
\cite{B:Andrews}. The probability density function (pdf) for this model is
given by
\begin{equation}
f_{\tilde{h}_{_{AB}}}\left( x\right) =\frac{2\left( \alpha _{_{AB}}\beta
_{_{AB}}\right) ^{\frac{\alpha _{_{AB}}+\beta _{_{AB}}}{2}}}{\Gamma (\alpha
_{_{AB}})\Gamma (\beta _{_{AB}})}x^{\frac{\alpha _{_{AB}}+\beta _{_{AB}}}{2}%
-1}K_{\alpha _{_{AB}}-\beta _{_{AB}}}\left( 2\sqrt{\alpha _{_{AB}}\beta
_{_{AB}}x}\right)  \label{pdf}
\end{equation}%
where $\Gamma \left( \cdot \right) $ is the Gamma function \cite[Eq. (8.310)]%
{B:Gra_Ryz_Book} and $K_{\nu }\left( \cdot \right) $ is the $\nu $th order
modified Bessel function of the second kind \cite[Eq. (8.432/9)]%
{B:Gra_Ryz_Book}. Furthermore, $\alpha _{AB}$ and $\beta _{AB}$ are
parameters related to the effective atmospheric conditions via $\alpha
_{_{AB}}=\left[ \exp \left( \frac{0.49\sigma _{R}^{2}}{\left( 1+1.11\sigma
_{R}^{\frac{12}{5}}\right) ^{\frac{7}{6}}}\right) -1\right] ^{-1}$and $\beta
_{_{AB}}=\left[ \exp \left( \frac{0.51\sigma _{R}^{2}}{\left( 1+0.69\sigma
_{R}^{\frac{12}{5}}\right) ^{\frac{7}{6}}}\right) -1\right] ^{-1}$ \cite%
{B:Andrews}, where $\sigma _{R}^{2}=1.23C_{n}^{2}\left( \frac{2\pi }{\lambda
}\right) ^{\frac{7}{6}}d_{_{AB}}^{\frac{11}{6}}$ denotes the Rytov variance%
\footnote{%
The Rytov variance is indicative of the strength of turbulence-induced
fading. More specifically, values $\sigma _{R}^{2}<1$ correspond to weak
turbulence conditions, while values $\sigma _{R}^{2}>1$ correspond to the
moderate-strong turbulence regime \cite{J:Andrews_Gamma}.}, $C_{n}^{2}$ is
the weather dependent index of refraction structure parameter, and $\lambda $
represents the wavelength of the optical carrier.

\subsection{Mode of Operation}

Throughout this work, three different cooperative relaying protocols are
considered: the all-active protocol, originally presented in \cite%
{J:UysalRelays}, where all the available relays are activated, and the
select-max and the distributed switch and stay combining (DSSC) protocols,
which are both based on the concept of selecting a single relay.

\subsubsection{All-active}

In this relaying scheme, the source activates all relays and the total power
is divided between all available FSO links. Since the relay nodes operate in
the DF mode only the relays that successfully decode the received optical
signal remodulate the intensity of the optical carrier and forward the
information to the destination. At the destination, owing to the presence of
a large FOV aperture, aperture averaging occurs \cite{J:Navid} and all the
received optical signals are added. Hence, assuming perfect synchronization,
the output of the combiner can be expressed as%
\begin{equation}
\mathbf{r}_{D}=\left[
\begin{array}{c}
\eta T_{b}\left( \sum\limits_{m\in \mathbf{D}}\rho
_{_{R_{m}D}}h_{_{R_{m}D}}P_{t}+P_{b}\right) +n^{s} \\
\eta T_{b}P_{b}+n^{n}%
\end{array}%
\right]  \label{rec_singal_EGC}
\end{equation}%
where $\mathbf{D}$ denotes the decoding set formed by the relays that have
succesfully decoded the signal. Since the total power is divided between all
available links, it follows that $\sum_{i=1}^{N}\left( \rho
_{_{SR_{i}}}+\rho _{_{R_{i}D}}\right) =1.$

The advantage of this scheme is that CSI is not required neither at the
transmitter nor the receiver side, since the source transmits to all
available\ relays, regardless of their channel gain. However, since it is
assumed that all the signals arrive at the destination at the same time,
this scheme requires accurate timing synchronization in order to account for
the different propagation delays of the different paths, resulting in high
complexity.

\subsubsection{Select-Max}

This relaying protocol selects a single relay out of the set of $N$
available relays in each transmission slot. In particular, the relay which
maximizes an appropriately defined metric is selected. This metric accounts
for both the $S$-$R_{i}$ and $R_{i}$-$D$ links and reflects the quality of
the $i$th end-to-end path. Here, we adopt the minimum value of the
intermediate link SNRs,%
\begin{equation}
\gamma _{i}=\min \left( \gamma _{_{SR_{i}}},\gamma _{_{R_{i}D}}\right) ,
\label{min}
\end{equation}%
as the quality measure of the $i$th end-to-end path, which will be referred
as the "\textit{min equivalent SNR}" throughout the paper. Note that (\ref%
{min}) represents an outage-based definition of the selection metric, in the
sense that an outage on the $i$th end-to-end link occurs if $\gamma _{i}$
falls below the outage threshold SNR. Hence, the single relay that is
activated in the select-max relaying protocol, $R_{b},$ is selected
according to the rule%
\begin{equation}
b=\operatornamewithlimits{argmax}_{i\in \left\{ 1,...N\right\} }\gamma _{i}.
\label{sel_max}
\end{equation}

Since a single relay is activated in the select-max protocol, the total
available optical power is divided between the $S$-$R_{b}$ and $R_{b}$-$D$
links, i.e., $\rho _{_{SR_{b}}}+\rho _{_{R_{b}D}}=1,$and in the case that $%
R_{_{b}}$ has successfully decoded the received optical signal, i.e., $b\in
\mathbf{D}$, the signal at the destination can be expressed as
\begin{equation}
\mathbf{r}_{D}=\left[
\begin{array}{c}
\eta T_{b}\left( \rho _{_{R_{b}D}}P_{t}h_{_{R_{b}D}}+P_{b}\right) +n^{s} \\
\eta T_{b}P_{b}+n^{n}%
\end{array}%
\right] .  \label{rec_sel_max}
\end{equation}

This relaying scheme requires the CSI of all the available $S$-$R_{i}$ and $%
R_{i}$-$D$ FSO links in order to perform the selection process. This can be
achieved by some signalling process that takes advantage of the
slowly-varying nature of the FSO channel . Here, each receiver estimates the
correspponding link CSI and feeds it back to the source through a reliable
low-rate RF feedback link. It is emphasized that since only one end-to-end
path is activated in each transmission slot, only one signal arrives at the
destination and thus synchronization between the relays is not needed.

\subsubsection{DSSC}

Requiring less CSI than select-max, the DSSC protocol applies to the case
where there are only two relays available and one of them is selected to
take part in the communication between the source and the destination, in a
switch-and-stay fashion \cite{J:Dio}. More specifically, in each
transmission slot the destination compares the $\min $ equivalent SNR of the
active end-to-end path with a switching threshold, denoted by $T$. If this
SNR is smaller than $T$, the destination notifies the source and the other
available relay is selected for taking part in the communication, regardless
of \ its end-to-end performance metric.

Mathematically speaking, denoting the two available relays by $R_{1}$ and $%
R_{2}$ and the $\min $ equivalent SNR of the $i$th end-to-end path during
the $j$th transmission period by $\gamma _{i}^{j}$, the active relay in the $%
j$th transmission period, $R_{b}^{j}$, is determined as follows:%
\begin{equation}
\text{if }R_{b}^{j-1}=R_{1}\text{ then }R_{b}^{j}=\left\{
\begin{array}{c}
R_{1}\text{ when }\gamma _{1}^{j}\geq T \\
R_{2}\text{ when }\gamma _{1}^{j}<T%
\end{array}%
\right.  \label{DSSC_1}
\end{equation}%
and
\begin{equation}
\text{if }R_{b}^{j-1}=R_{2}\text{ then }R_{b}^{j}=\left\{
\begin{array}{c}
R_{2}\text{ when }\gamma _{2}^{j}\geq T \\
R_{1}\text{ when }\gamma _{2}^{j}<T%
\end{array}%
\right. .  \label{DSSC_2}
\end{equation}%
Hence, in the case that $R_{b}^{j}$ has successfully decoded the received
signal, the optical signal at the destination is given by
\begin{equation}
\mathbf{r}_{D}=\left[
\begin{array}{c}
\eta T_{b}\left( \rho _{_{R_{b}^{j}D}}P_{t}h_{_{R_{b}^{j}D}}+P_{b}\right)
+n^{s} \\
\eta T_{b}P_{b}+n^{n}%
\end{array}%
\right] .  \label{DSSC_rec}
\end{equation}%
Since in this protocol only a single relay assists in the communication
between the source and the destination, the power allocation rule of the
select-max protocol also holds for DSSC relaying.

When there are more than two available relays in the system, i.e., $N>2$, a
modified version of DSSC protocol could initially sort all the available
paths based on their end-to-end distance, defined as
\begin{equation}
d_{i}=\max \left( d_{_{SR_{i}}},d_{_{R_{i}D}}\right)  \label{DSSC_criterion}
\end{equation}%
with $i=1,...,N$, and, then, use as $R_{1}$ and $R_{2}$ the two relays that
correspond to the paths with the minimum end-to-end distance. It should be
noted that end-to-end distance is an indicative of the path's end-to-end
performance, taking into consideration that both path loss and Rytov
variance are monotonically increasing with respect to the link distance.

The simplicity of this scheme compared to the select-max protocol lies in
the fact that only the CSI of the active end-to-end path is required for the
selection process, resulting in less implementation complexity. Furthermore,
as in the select-max scheme, no synchronization among the relays is needed,
since only one end-to-end path is activated in each transmission slot.

\section{Outage Analysis\label{OA}}

At a given transmission rate, $r_{0},$ the outage probability is defined as
\begin{equation}
P_{out}\left( r_{0}\right) =\Pr \left\{ C\left( \gamma \right)
<r_{0}\right\} ,  \label{outage_1}
\end{equation}%
where $C\left( \cdot \right) $ is the instantaneous capacity, which is a
function of the instantaneous SNR. Since $C\left( \cdot \right) $ is
monotonically increasing with respect to $\gamma $, (\ref{outage_1}) can be
equivalently rewritten as
\begin{equation}
P_{out}\left( r_{0}\right) =\Pr \left\{ \gamma <\gamma _{th}\right\} ,
\label{outage_2}
\end{equation}%
where $\gamma _{th}=C^{-1}\left( r_{0}\right) $ denotes the threshold SNR.
If the SNR, $\gamma $, drops below $\gamma _{th}$, an outage occurs,
implying that the signal cannot be decoded with arbitrarily low error
probability at the receiver. Henceforth, it is assumed that the threshold
SNR, $\gamma _{th}$, is identical for all links of the relaying system.

\subsection{Outage Probability of the Intermediate Links}

Since DF relaying is considered, an outage event in any of the intermediate
links may lead to an outage of the overall relaying scheme. Therefore, the
calculation of the outage probability of each intermediate link is
considered as a building block for the outage probability of the relaying
schemes under investigation.

By combining (\ref{inst_SNR}) with (\ref{outage_2}), the outage probability
of the FSO link between nodes $A$ and $B$ is defined as%
\begin{equation}
P_{out,AB}=\Pr \left\{ \frac{\eta ^{2}T_{b}^{2}\rho
_{AB}^{2}P_{t}^{2}h_{_{AB}}^{2}}{N_{0}}<\gamma _{th}\right\}
\label{Pout_SISO}
\end{equation}%
which can be equivalently rewritten as%
\begin{equation}
P_{out,AB}=\Pr \left\{ \tilde{h}_{_{AB}}<\frac{1}{\bar{h}_{_{AB}}\rho
_{AB}P_{M}}\right\}  \label{Pout_SISO_3}
\end{equation}%
where $P_{M}$ is the power margin given by $P_{M}=\frac{\eta T_{b}P_{t}}{%
\sqrt{N_{0}\gamma _{th}}}$. Using the cumulative density function (cdf) of
the Gamma-Gamma distribution \cite[Eq. (7)]{J:Tsif_coh}, the outage
probability of the FSO link between nodes $A$ and $B$ can be analytically
evaluated for any $\alpha _{AB}$ and $\beta _{AB}$, yielding%
\begin{equation}
P_{out,AB}=\frac{1}{\Gamma \left( \alpha _{_{AB}}\right) \Gamma \left( \beta
_{_{AB}}\right) }G_{1,3}^{2,1}\left[ \frac{\frac{\alpha _{_{AB}}\beta
_{_{AB}}}{\bar{h}_{_{AB}}}}{P_{M}\rho _{AB}}\left\vert
\begin{array}{c}
1 \\
\alpha _{_{AB}},\beta _{_{AB}},0 \\
\end{array}%
\right. \right]  \label{Pout_SISO_Meijer}
\end{equation}%
where $G_{p,q}^{m,n}\left[ \cdot \right] $ is the Meijer's $G$-function \cite%
[Eq. (9.301)]{B:Gra_Ryz_Book}.

To gain more physical insights from (\ref{Pout_SISO_Meijer}), it is
meaningful to explore the outage probability in the high power margin regime.

\begin{theorem}
For high values of power margin and when $\left( \alpha _{_{AB}}-\beta
_{_{AB}}\right) \notin \mathbb{Z}$, the outage probability of the FSO link
between nodes $A$ and $B$ can be approximated by
\begin{equation}
P_{out,AB}\approx \frac{\Gamma \left( p_{_{AB}}-q_{_{AB}}\right) }{\Gamma
\left( \alpha _{_{AB}}\right) \Gamma \left( \beta _{_{AB}}\right) }\frac{%
\left( \frac{\alpha _{_{AB}}\beta _{_{AB}}}{\bar{h}_{_{AB}}P_{M}\rho _{AB}}%
\right) ^{q_{_{AB}}}}{q_{_{AB}}}  \label{Pout_asympt}
\end{equation}%
where $p_{_{AB}}=\max \left( \alpha _{_{AB}},\beta _{_{AB}}\right) $ and $%
q_{_{AB}}=\min \left( \alpha _{_{AB}},\beta _{_{AB}}\right) $.
\end{theorem}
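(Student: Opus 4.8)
The strategy is to extract the leading term of the Meijer's $G$-function in (\ref{Pout_SISO_Meijer}) as its argument $z=\frac{\alpha _{_{AB}}\beta _{_{AB}}}{\bar{h}_{_{AB}}P_{M}\rho _{AB}}$ tends to zero, which is precisely the high power margin regime $P_{M}\rightarrow \infty $. Starting from the Mellin--Barnes integral representation of the $G$-function \cite[Eq.~(9.301)]{B:Gra_Ryz_Book} and simplifying with the identity $\Gamma (1+s)=s\,\Gamma (s)$, one can write
\begin{equation}
G_{1,3}^{2,1}\!\left[ z\left\vert
\begin{array}{c}
1 \\
\alpha _{_{AB}},\beta _{_{AB}},0
\end{array}
\right. \right] =\frac{1}{2\pi i}\int_{\mathcal{L}}\frac{\Gamma (\alpha _{_{AB}}-s)\,\Gamma (\beta _{_{AB}}-s)}{s}\,z^{s}\,ds,
\end{equation}
where $\mathcal{L}$ is a vertical line with $0<\Re \{s\}<q_{_{AB}}$. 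Since $z\rightarrow 0$, I would close $\mathcal{L}$ to the right; as $p=1<q=3$ the integrand decays on the enclosing large semicircle, so the contour integral equals minus the (convergent) sum of the residues of the integrand at the poles lying to the right of $\mathcal{L}$.

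These poles are precisely those of $\Gamma (\alpha _{_{AB}}-s)$ and $\Gamma (\beta _{_{AB}}-s)$, located at $s=\alpha _{_{AB}}+k$ and $s=\beta _{_{AB}}+k$ for $k=0,1,2,\dots $, the factor $1/s$ being analytic there. The residue attached to such a pole carries a factor $z^{\alpha _{_{AB}}+k}$ or $z^{\beta _{_{AB}}+k}$, so as $z\rightarrow 0$ the dominant contribution comes from the pole with the smallest real part, i.e.\ $s=q_{_{AB}}=\min (\alpha _{_{AB}},\beta _{_{AB}})$ (the $k=0$ term). This is where the hypothesis $\alpha _{_{AB}}-\beta _{_{AB}}\notin \mathbb{Z}$ enters: it guarantees that the two sequences of poles never coincide, so that the dominant pole — and every other pole — is simple (no double poles, hence no $\log z$ corrections), and it ensures that all the remaining residues contribute strictly higher powers of $z$.

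It then only remains to compute the leading residue. Taking without loss of generality $q_{_{AB}}=\beta _{_{AB}}$ (the case $q_{_{AB}}=\alpha _{_{AB}}$ is recovered by the $\alpha _{_{AB}}\leftrightarrow \beta _{_{AB}}$ symmetry of (\ref{Pout_SISO_Meijer})) and using $\mathrm{Res}_{s=\beta _{_{AB}}}\Gamma (\beta _{_{AB}}-s)=-1$, one gets
\begin{equation}
G_{1,3}^{2,1}\!\left[ z\left\vert
\begin{array}{c}
1 \\
\alpha _{_{AB}},\beta _{_{AB}},0
\end{array}
\right. \right] \approx \frac{\Gamma (\alpha _{_{AB}}-\beta _{_{AB}})}{\beta _{_{AB}}}\,z^{\beta _{_{AB}}}=\frac{\Gamma (p_{_{AB}}-q_{_{AB}})}{q_{_{AB}}}\,z^{q_{_{AB}}},
\end{equation}
and inserting this back into (\ref{Pout_SISO_Meijer}) with $z=\frac{\alpha _{_{AB}}\beta _{_{AB}}}{\bar{h}_{_{AB}}P_{M}\rho _{AB}}$ gives exactly (\ref{Pout_asympt}). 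As a cross-check, the same leading term is obtained without the $G$-function by replacing $K_{\alpha _{_{AB}}-\beta _{_{AB}}}$ in the pdf (\ref{pdf}) with its small-argument series, retaining the dominant power $x^{q_{_{AB}}-1}$, and integrating over $(0,\tau )$ with $\tau =1/(\bar{h}_{_{AB}}\rho _{AB}P_{M})$. The computation is otherwise routine; the only delicate points are the bookkeeping of the pole structure under the non-integer condition and justifying that the retained residue strictly dominates the remaining terms.
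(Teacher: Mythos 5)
Your proof is correct. The paper's own argument (Appendix I) works on the other side of the same coin: it takes the infinite series representation of the Gamma--Gamma pdf from \cite[Eqs.~(7),~(8)]{J:SchoberTCOM_GG} (valid precisely when $\alpha_{_{AB}}-\beta_{_{AB}}\notin\mathbb{Z}$), integrates term by term over $\bigl(0,\tfrac{1}{\bar h_{_{AB}}\rho_{AB}P_{M}}\bigr)$ to get a double family of terms with exponents $\beta_{_{AB}}+l$ and $\alpha_{_{AB}}+l$, keeps the $l=0$ term with the smaller exponent, and then needs Euler's reflection formula to convert the prefactor $\tfrac{\pi}{\sin(\pi(\alpha_{_{AB}}-\beta_{_{AB}}))\,\Gamma(1-\alpha_{_{AB}}+\beta_{_{AB}})}$ into $\Gamma(\alpha_{_{AB}}-\beta_{_{AB}})$. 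You instead read the same ascending expansion off the Mellin--Barnes integral of the $G_{1,3}^{2,1}$ function by residue calculus; that series of residues \emph{is} the series the paper integrates, so the two derivations are equivalent in substance (your closing ``cross-check'' is literally the paper's method). What your route buys is that $\Gamma(p_{_{AB}}-q_{_{AB}})/q_{_{AB}}$ appears directly as the residue at $s=q_{_{AB}}$, with no reflection-formula step, and the role of the hypothesis $\alpha_{_{AB}}-\beta_{_{AB}}\notin\mathbb{Z}$ is made transparent (simple, non-coalescing poles, hence no $\log$ terms); what it costs is having to justify closing the contour to the right, which you correctly tie to $p=1<q=3$. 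Both the sign bookkeeping (clockwise contour, $\mathrm{Res}_{s=\beta_{_{AB}}}\Gamma(\beta_{_{AB}}-s)=-1$) and the final substitution into (\ref{Pout_SISO_Meijer}) check out.
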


\begin{proof}
A detailed proof is provided in Appendix I.
\end{proof}

It should be noted that in the analysis that follows it is assumed that $%
\left( \alpha _{_{AB}}-\beta _{_{AB}}\right) \notin \mathbb{Z}$ holds for
every possible FSO link. Although this condition may seem restrictive, it
can be relaxed in practical applications by inserting an infinitely small
perturbation term $\varepsilon $, so that $\left( \alpha _{_{AB}}-\beta
_{_{AB}}+\varepsilon \right) \notin \mathbb{Z}$, when $\left( \alpha
_{_{AB}}-\beta _{_{AB}}\right) \in \mathbb{Z}$ \cite{J:SchoberTCOM_GG}.

\subsection{Outage Probability of All-Active Relaying}

In this scheme an outage occurs when either the decoding set $\mathbf{D}$ is
empty or the SNR of the multiple-input single-output link between the
decoding relays and the destination falls below the outage threshold. Hence,
the outage probability of this scheme can be expressed as \cite[Eq. (30)]%
{J:UysalRelays}%
\begin{equation}
P_{out}=\sum_{n=1}^{2^{N}}\Pr \left\{ \sum\limits_{m\in S\left( n\right)
}\rho _{_{R_{m}D}}h_{_{R_{m}D}}<\frac{1}{P_{M}}\right\} \Pr \left\{ S\left(
n\right) \right\}  \label{EGC}
\end{equation}%
where $S\left( n\right) $ denotes the $n$th possible decoding set, $2^{N}$
is the total number of decoding sets, and $\Pr \left\{ S\left( n\right)
\right\} $ is the probability of event $\left\{ \mathbf{D}=S\left( n\right)
\right\} $ given by%
\begin{eqnarray}
\Pr \left\{ S\left( n\right) \right\} &=&\prod_{m\in S\left( n\right) }\Pr
\left\{ \gamma _{_{SR_{m}}}>\gamma _{th}\right\} \prod_{m\notin S\left(
n\right) }\Pr \left\{ \gamma _{_{SR_{m}}}<\gamma _{th}\right\}  \notag \\
&=&\prod_{m\in S\left( n\right) }\left( 1-\Pr \left\{ h_{_{SR_{m}}}<\frac{1}{%
\rho _{_{SR_{m}}}P_{M}}\right\} \right) \prod_{m\notin S\left( n\right) }\Pr
\left\{ h_{_{SR_{m}}}<\frac{1}{\rho _{_{SR_{m}}}P_{M}}\right\} .
\label{Decod}
\end{eqnarray}

\subsubsection{Exact Analysis}

In order to evaluate (\ref{EGC}), the cdf of the sum of weighted
non-identical Gamma-Gamma variates, $h_{_{S\left( n\right)
}}=\sum\limits_{m\in S\left( n\right) }\rho _{_{_{R_{m}D}}}h_{_{R_{m}D}}$,
needs to be derived first. However, to the best of the authors' knowledge,
there are no closed-form analytical expressions for the exact distribution
of the sum of non-identical Gamma-Gamma variates. Therefore, the numerical
method of \cite[Eq. (9.186)]{B:Alouini}, which is based on the moment
generating function (MGF) approach, is applied and thus the cdf of $%
h_{_{S\left( n\right) }}$, denoted as $F_{h_{_{S\left( n\right) }}}\left(
\cdot \right) $, is evaluated via%
\begin{eqnarray}
F_{h_{_{S\left( n\right) }}}\left( x\right) &=&\frac{2^{-K}\exp \left( \frac{%
A}{2}\right) }{x}\sum_{k=1}^{K}\left(
\begin{array}{c}
K \\
k%
\end{array}%
\right) \left( \frac{1}{2}\mathrm{Re}\left\{ \frac{\prod\limits_{m\in
S\left( n\right) }\left( \mathcal{M}_{_{R_{m}D}}\left( -\frac{A}{2x}\right)
\right) }{\frac{A}{2x}}\right\} \right.  \notag \\
&&\left. +\sum_{l=1}^{L+k}\left( -1\right) ^{l}\mathrm{Re}\left\{ \frac{%
\prod\limits_{m\in S\left( n\right) }\left( \mathcal{M}_{_{R_{m}D}}\left( -%
\frac{A+j2\pi l}{2x}\right) \right) }{\frac{A+j2\pi l}{2x}}\right\} \right)
\label{Out_sum}
\end{eqnarray}%
where $\mathcal{M}_{_{R_{m}D}}\left( \cdot \right) $ is the MGF of the
channel gain of the $R_{m}D$ FSO link given by \cite[Eq. (4)]{J:bithas},
while the parameters $A$, $K$, $L$ are calculated based on the numerical
error term obtained by \cite[Eq. (9.187)]{B:Alouini}.

\begin{theorem}
The outage probability of the all-active relaying protocol in Gamma-Gamma
fading is given by%
\begin{eqnarray}
P_{out} &=&\sum_{n=1}^{2^{N}}\prod_{m\in S\left( n\right) }\left( 1-\frac{1}{%
\Gamma \left( \alpha _{_{SR_{m}}}\right) \Gamma \left( \beta
_{_{SR_{m}}}\right) }G_{1,3}^{2,1}\left[ \frac{\frac{\alpha
_{_{SR_{m}}}\beta _{_{SR_{m}}}}{\bar{h}_{_{SR_{m}}}}}{\rho _{_{SR_{m}}}P_{M}}%
\left\vert
\begin{array}{c}
1 \\
\alpha _{_{SR_{m}}},\beta _{_{SR_{m}}},0 \\
\end{array}%
\right. \right] \right)  \notag \\
&&\times \prod_{m\notin S\left( n\right) }\frac{1}{\Gamma \left( \alpha
_{_{SR_{m}}}\right) \Gamma \left( \beta _{_{SR_{m}}}\right) }G_{1,3}^{2,1}%
\left[ \frac{\frac{\alpha _{_{SR_{m}}}\beta _{_{SR_{m}}}}{\bar{h}_{_{SR_{m}}}%
}}{\rho _{_{SR_{m}}}P_{M}}\left\vert
\begin{array}{c}
1 \\
\alpha _{_{SR_{m}}},\beta _{_{SR_{m}}},0 \\
\end{array}%
\right. \right] F_{h_{_{S\left( n\right) }}}\left( \frac{1}{P_{M}}\right) .
\label{EGC_4}
\end{eqnarray}
\end{theorem}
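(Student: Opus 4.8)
The plan is to evaluate in closed form the two probability factors appearing in the $n$th summand of the decomposition (\ref{EGC}), using (\ref{Decod}) for the first one, and then to reassemble. Note that (\ref{EGC}) is itself taken as given (it conditions on the decoding set and uses the mutual independence of the $S$-$R_m$ links and the $R_m$-$D$ links), so no further probabilistic manipulation of (\ref{EGC}) is needed; the task reduces to expressing (i) $\Pr\{S(n)\}$ and (ii) $\Pr\{\sum_{m\in S(n)}\rho_{R_mD}h_{R_mD}<1/P_M\}$ in terms of the special functions at hand.

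For factor (i), I would first observe that by (\ref{Pout_SISO}) the per-link decoding-failure event $\{\gamma_{SR_m}<\gamma_{th}\}$ coincides with $\{h_{SR_m}<1/(\rho_{SR_m}P_M)\}$, i.e.\ with the single-link outage event already analyzed in the intermediate-link subsection. Hence $\Pr\{h_{SR_m}<1/(\rho_{SR_m}P_M)\}$ is given by (\ref{Pout_SISO_Meijer}) with the generic link label $AB$ replaced throughout by $SR_m$. Substituting this Meijer-$G$ expression into each of the two products in (\ref{Decod}) reproduces exactly the factors $\prod_{m\in S(n)}(1-\frac{1}{\Gamma(\alpha_{SR_m})\Gamma(\beta_{SR_m})}G_{1,3}^{2,1}[\cdots])$ and $\prod_{m\notin S(n)}\frac{1}{\Gamma(\alpha_{SR_m})\Gamma(\beta_{SR_m})}G_{1,3}^{2,1}[\cdots]$ appearing in (\ref{EGC_4}); the factorization over $m$ is the one already asserted in (\ref{Decod}) and rests on the independence of the FSO links.

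For factor (ii), by the definition $h_{S(n)}=\sum_{m\in S(n)}\rho_{R_mD}h_{R_mD}$ the probability $\Pr\{\sum_{m\in S(n)}\rho_{R_mD}h_{R_mD}<1/P_M\}$ is, by definition, the cdf $F_{h_{S(n)}}$ evaluated at $1/P_M$; for the empty decoding set $S(n)=\emptyset$ this is trivially $1$ since $0<1/P_M$, so that term correctly contributes its full probability $\Pr\{S(n)=\emptyset\}$. Inserting the expression from factor (i) together with $F_{h_{S(n)}}(1/P_M)$ back into (\ref{EGC}) and summing over all $2^N$ decoding sets yields (\ref{EGC_4}). The one non-elementary point — and the reason the result is ``closed-form'' only in a generalized sense — is that no closed form is known for the cdf of a sum of non-identical Gamma-Gamma variates, so $F_{h_{S(n)}}$ is retained in the MGF-based form (\ref{Out_sum}); every other ingredient is an explicit Meijer-$G$ expression, and the remainder of the derivation is a direct substitution into (\ref{EGC}) and (\ref{Decod}).
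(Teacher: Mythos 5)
Your proposal is correct and follows the same route as the paper, which simply combines (\ref{EGC}) with (\ref{Pout_SISO_Meijer}) (via (\ref{Decod})) and identifies the sum probability as $F_{h_{S(n)}}(1/P_M)$ from (\ref{Out_sum}); your write-up just makes the substitutions explicit and adds the (correct) remark about the empty decoding set.
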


\begin{proof}
The proof follows straightforwardly by combining (\ref{EGC}) with (\ref%
{Pout_SISO_Meijer}) and (\ref{Out_sum}).
\end{proof}

\subsubsection{Asymptotic Analysis}

In order to gain more physical insights into the performance of the relaying
protocol under consideration, we further consider the high power margin
regime, i.e., when $P_{M}\rightarrow \infty $. In order to perform this
analysis, an asymptotic expression for $F_{h_{_{S\left( n\right) }}}\left(
\cdot \right) $ needs to be derived first.

\begin{lemma}
For high values of power margin, the cdf for the weighted sum of
non-identical Gamma-Gamma variates that corresponds to decoding set $S\left(
n\right) $, $h_{_{S\left( n\right) }}$, can be approximated as%
\begin{equation}
F_{h_{_{S\left( n\right) }}}\left( x\right) \approx \frac{\prod\limits_{m\in
S\left( n\right) }\left( \frac{\alpha _{_{R_{m}D}}\beta _{_{R_{m}D}}}{\bar{h}%
_{_{R_{m}D}}\rho _{_{R_{m}D}}}\right) ^{q_{_{R_{m}D}}}\frac{\Gamma \left(
q_{_{R_{m}D}}\right) \Gamma \left( p_{_{R_{m}D}}-q_{_{R_{m}D}}\right) }{%
\Gamma \left( \alpha _{_{R_{m}D}}\right) \Gamma \left( \beta
_{_{R_{m}D}}\right) }}{\left( \sum\limits_{m\in S\left( n\right)
}q_{_{R_{m}D}}\right) \Gamma \left( \sum\limits_{m\in S\left( n\right)
}q_{_{R_{m}D}}\right) }x^{\left( \sum\limits_{m\in S\left( n\right)
}q_{_{R_{m}D}}\right) }.  \label{cdf_sum}
\end{equation}
\end{lemma}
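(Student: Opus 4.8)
The plan is to reduce the claim to the behaviour of the densities of the individual summands near the origin and then to propagate that behaviour through the convolution that defines $h_{_{S\left(n\right)}}=\sum_{m\in S\left(n\right)}g_{m}$, where $g_{m}:=\rho_{_{R_{m}D}}h_{_{R_{m}D}}$ and the $g_{m}$ are mutually independent. The value of the cdf $F_{h_{_{S\left(n\right)}}}\left(x\right)$ for small $x$ is governed by the behaviour of each $f_{g_{m}}\left(y\right)$ for small $y$, since as $x\rightarrow0$ every summand in the convolution integral is forced to be small. Hence there are two tasks: first, extract the leading term of each $f_{g_{m}}$ near zero; second, combine these leading terms.

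For the first task I would start from the Gamma-Gamma pdf (\ref{pdf}), apply the change of variables $h_{_{R_{m}D}}=\bar{h}_{_{R_{m}D}}\tilde{h}_{_{R_{m}D}}$, and insert the small-argument series of the modified Bessel function $K_{\nu}$ of non-integer order $\nu=\alpha_{_{R_{m}D}}-\beta_{_{R_{m}D}}$ (obtained from the power series of $I_{\pm\nu}$ \cite{B:Gra_Ryz_Book} together with $K_{\nu}\left(z\right)=\frac{\pi}{2}\left[I_{-\nu}\left(z\right)-I_{\nu}\left(z\right)\right]/\sin\left(\nu\pi\right)$). Retaining only the dominant power term, namely the one producing the exponent $q_{_{R_{m}D}}=\min\left(\alpha_{_{R_{m}D}},\beta_{_{R_{m}D}}\right)$, yields $f_{g_{m}}\left(y\right)\approx a_{m}\,y^{q_{_{R_{m}D}}-1}$ as $y\rightarrow0^{+}$, with $a_{m}=\frac{\left(\alpha_{_{R_{m}D}}\beta_{_{R_{m}D}}/\left(\bar{h}_{_{R_{m}D}}\rho_{_{R_{m}D}}\right)\right)^{q_{_{R_{m}D}}}\Gamma\left(p_{_{R_{m}D}}-q_{_{R_{m}D}}\right)}{\Gamma\left(\alpha_{_{R_{m}D}}\right)\Gamma\left(\beta_{_{R_{m}D}}\right)}$. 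Integrating this approximation reproduces exactly the single-link asymptotic (\ref{Pout_asympt}), a useful consistency check; indeed this step is essentially the computation already performed for Theorem~1.

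For the second task I would combine the summands in the Laplace (MGF) domain, which also connects naturally with (\ref{Out_sum}). From $f_{g_{m}}\left(y\right)\approx a_{m}y^{q_{_{R_{m}D}}-1}$ and Watson's lemma, the Laplace transform $\int_{0}^{\infty}e^{-sy}f_{g_{m}}\left(y\right)\,dy$ behaves like $a_{m}\Gamma\left(q_{_{R_{m}D}}\right)s^{-q_{_{R_{m}D}}}$ as $s\rightarrow\infty$; by independence the Laplace transform of $h_{_{S\left(n\right)}}$ is the product of these, hence asymptotic to $\left(\prod_{m\in S\left(n\right)}a_{m}\Gamma\left(q_{_{R_{m}D}}\right)\right)s^{-\sum_{m\in S\left(n\right)}q_{_{R_{m}D}}}$, and inverting (a Tauberian argument) gives $f_{h_{_{S\left(n\right)}}}\left(x\right)\approx\frac{\prod_{m\in S\left(n\right)}a_{m}\Gamma\left(q_{_{R_{m}D}}\right)}{\Gamma\left(\sum_{m\in S\left(n\right)}q_{_{R_{m}D}}\right)}x^{\sum_{m\in S\left(n\right)}q_{_{R_{m}D}}-1}$ as $x\rightarrow0^{+}$. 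Integrating once from $0$ to $x$ and substituting the explicit $a_{m}$ then delivers (\ref{cdf_sum}). An equivalent transform-free route is to write $f_{h_{_{S\left(n\right)}}}$ as the $\left|S\left(n\right)\right|$-fold convolution of the $f_{g_{m}}$, to replace each factor by its leading term (legitimate because the integration runs over the simplex $\left\{u_{m}\geq0,\ \sum_{m}u_{m}=x\right\}$ with $x\rightarrow0$), and to evaluate the resulting iterated Beta-function integrals via $\int_{0}^{x}u^{a-1}\left(x-u\right)^{b-1}du=x^{a+b-1}\Gamma\left(a\right)\Gamma\left(b\right)/\Gamma\left(a+b\right)$.

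The main obstacle is making the step ``replace each density by its leading term'' rigorous, i.e. justifying the Tauberian inversion, or equivalently showing that the error terms $o\!\left(y^{q_{_{R_{m}D}}-1}\right)$ contribute only a genuinely higher-order term to the convolution. Some care is needed when one or more $q_{_{R_{m}D}}<1$, where the leading term has an integrable singularity at the origin; a uniform bound of the form $f_{g_{m}}\left(y\right)\leq Cy^{q_{_{R_{m}D}}-1}$ on a neighbourhood of $0$ together with dominated convergence handles this. Finally, the standing non-integer hypothesis $\left(\alpha_{_{R_{m}D}}-\beta_{_{R_{m}D}}\right)\notin\mathbb{Z}$ is precisely what guarantees that the Bessel-function expansion invoked above carries no logarithmic terms, so that the single power law $y^{q_{_{R_{m}D}}-1}$ is indeed the leading behaviour.
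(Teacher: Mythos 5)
Your proposal follows essentially the same route as the paper's Appendix~II: extract the leading power $y^{q_{_{R_{m}D}}-1}$ of each summand's pdf near the origin from the series expansion of the Gamma-Gamma pdf, multiply the resulting Laplace transforms of the independent summands, invert, and integrate to obtain the cdf. Your explicit tracking of the $\Gamma\left(q_{_{R_{m}D}}\right)$ factors in the transform domain and your remarks on error control and on why $\left(\alpha_{_{R_{m}D}}-\beta_{_{R_{m}D}}\right)\notin\mathbb{Z}$ is needed are correct and, if anything, tidier than the paper's intermediate MGF display.
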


\begin{proof}
A detailed proof is provided in Appendix II.
\end{proof}

The asymptotic expression for the outage probability of the all-active
relaying scheme is given by the following theorem.

\begin{theorem}
For high values of power margin, the outage probability of the all-active
relaying scheme in Gamma-Gamma fading can be approximated by
\begin{eqnarray}
P_{out} &\approx &\sum\limits_{n=1}^{2^{N}}\prod\limits_{m\notin S\left(
n\right) }\left( \frac{\pi \Gamma \left( p_{_{SR_{m}}}-q_{_{SR_{m}}}\right)
}{\Gamma \left( \alpha _{_{SR_{m}}}\right) \Gamma \left( \beta
_{_{SR_{m}}}\right) }\frac{\left( \frac{\alpha _{_{SR_{m}}}\beta _{_{SR_{m}}}%
}{\bar{h}_{_{SR_{m}}}\rho _{_{SR_{m}}}}\right) ^{q_{_{SR_{m}}}}}{%
q_{_{SR_{m}}}}\right)  \notag \\
&&\times \frac{\prod\limits_{m\in S\left( n\right) }\left( \frac{\alpha
_{_{R_{m}D}}\beta _{_{R_{m}D}}}{\bar{h}_{_{R_{m}D}}\rho _{_{R_{m}D}}}\right)
^{q_{_{R_{m}D}}}\frac{\Gamma \left( q_{_{R_{m}D}}\right) \Gamma \left(
p_{_{R_{m}D}}-q_{_{R_{m}D}}\right) }{\Gamma \left( \alpha
_{_{R_{m}D}}\right) \Gamma \left( \beta _{_{R_{m}D}}\right) }}{\left(
\sum\limits_{m\in S\left( n\right) }q_{_{R_{m}D}}\right) \Gamma \left(
\sum\limits_{m\in S\left( n\right) }q_{_{R_{m}D}}\right) }\left( \frac{1}{%
P_{M}}\right) ^{\left( \sum\limits_{m\notin S\left( n\right)
}q_{_{SR_{m}}}+\sum\limits_{m\in S\left( n\right) }q_{_{R_{m}D}}\right) }.
\label{Pout_all_active_asymp}
\end{eqnarray}
\end{theorem}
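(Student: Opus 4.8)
The plan is to feed the two asymptotic building blocks already available in this section---the single-link outage expression (\ref{Pout_asympt}) and the sum-cdf expression (\ref{cdf_sum}) of the preceding lemma---into the exact outage representation and then collect powers of $1/P_{M}$. Combining (\ref{EGC}) with (\ref{Decod}) and writing the decoding-set probability through the intermediate $S$-$R_{m}$ outage probabilities $P_{out,SR_{m}}=\Pr\{h_{_{SR_{m}}}<1/(\rho_{_{SR_{m}}}P_{M})\}$ gives $P_{out}=\sum_{n=1}^{2^{N}}F_{h_{_{S(n)}}}(1/P_{M})\prod_{m\in S(n)}(1-P_{out,SR_{m}})\prod_{m\notin S(n)}P_{out,SR_{m}}$, so it suffices to find the leading-order behaviour of each of the three factors as $P_{M}\rightarrow\infty$ and multiply.

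First I would handle the ``successful-link'' product: by (\ref{Pout_asympt}) every $P_{out,SR_{m}}$ decays like $(1/P_{M})^{q_{_{SR_{m}}}}$ and hence vanishes, so $\prod_{m\in S(n)}(1-P_{out,SR_{m}})=1+o(1)$ and may be replaced by $1$ at leading order. Next I would substitute the approximation (\ref{Pout_asympt}) for each ``failed-link'' factor $P_{out,SR_{m}}$ with $m\notin S(n)$, which reproduces the displayed product over $m\notin S(n)$ and contributes a factor $(1/P_{M})^{\sum_{m\notin S(n)}q_{_{SR_{m}}}}$. Finally I would evaluate the combiner-outage factor by invoking the lemma with $x=1/P_{M}$ in (\ref{cdf_sum}); this yields the displayed product over $m\in S(n)$ together with $(1/P_{M})^{\sum_{m\in S(n)}q_{_{R_{m}D}}}$. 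Multiplying the three pieces for each $n$, merging the two exponents into $\sum_{m\notin S(n)}q_{_{SR_{m}}}+\sum_{m\in S(n)}q_{_{R_{m}D}}$, and summing over the $2^{N}$ decoding sets then gives (\ref{Pout_all_active_asymp}).

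The step needing a word of care---and the closest thing here to an obstacle---is the claim that the product of the individual first-order expansions is itself the first-order expansion of the product. This is immediate: the intermediate FSO links are mutually independent, the decoding-set sum is finite, and a finite product of factors each equal to its leading term times $1+o(1)$ is again that product of leading terms times $1+o(1)$. Because the exponent $\sum_{m\notin S(n)}q_{_{SR_{m}}}+\sum_{m\in S(n)}q_{_{R_{m}D}}$ depends on $n$, the full sum over $2^{N}$ decoding sets is retained rather than collapsed to a single dominant term. The empty decoding set $S(n)=\emptyset$ merits a one-line check: there $h_{_{S(n)}}=0$, so $F_{h_{_{S(n)}}}(1/P_{M})=1$, which is exactly the $\sum_{m\in S(n)}q_{_{R_{m}D}}\rightarrow 0$ limit of (\ref{cdf_sum}) since $\lim_{u\rightarrow 0}u\,\Gamma(u)=1$; hence this term too is correctly captured by (\ref{Pout_all_active_asymp}) read with empty products. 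What remains is routine bookkeeping of the Gamma-function constants and the exponents, with no analytical difficulty.
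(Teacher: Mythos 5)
Your proposal is correct and follows essentially the same route as the paper: approximate $\Pr\{S(n)\}$ by dropping the product over successfully decoding relays (each factor being $1+o(1)$), substitute the single-link asymptotic (\ref{Pout_asympt}) for the failed $S$--$R_m$ links and the lemma's cdf (\ref{cdf_sum}) for the combiner, and collect the powers of $1/P_M$ term by term over the $2^N$ decoding sets; your extra checks (the empty decoding set via $\lim_{u\to 0}u\,\Gamma(u)=1$ and the legitimacy of multiplying leading-order expansions) only make the argument more complete than the paper's. Note only that a faithful substitution of (\ref{Pout_asympt}) does not produce the factor $\pi$ appearing in each $m\notin S(n)$ term of (\ref{Pout_all_active_asymp}); that $\pi$ appears to be a typographical remnant of the reflection-formula step in Appendix I rather than something your (or the paper's) derivation yields.
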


\begin{proof}
We first observe that in the high power margin regime, i.e., $%
P_{M}\rightarrow \infty $, (\ref{Decod}) can be approximated by
\begin{equation}
\Pr \left\{ S\left( n\right) \right\} \approx \prod_{m\notin S\left(
n\right) }\Pr \left\{ h_{_{SR_{m}}}<\frac{1}{\rho _{_{SR_{m}}}P_{M}}\right\}
.  \label{Decod3}
\end{equation}%
Hence, by combining (\ref{EGC}) with (\ref{Pout_asympt}), (\ref{cdf_sum}),
and (\ref{Decod3}), the asymptotic expression in (\ref{Pout_all_active_asymp}%
) is obtained. This concludes the proof.
\end{proof}

An important result derived from the asymptotic expression of the previous
theorem is the diversity gain of the transmission protocol under
consideration which is summarized in the ensuing corollary.

\begin{corollary}
For the all-active relaying protocol, the diversity gain of a relay-assisted
FSO system with $N$ relays is given as%
\begin{equation}
G_{d}=\min_{n\in \left\{ 1,...,2^{N}\right\} }\left( \sum\limits_{m\notin
S\left( n\right) }q_{_{SR_{m}}}+\sum\limits_{m\in S\left( n\right)
}q_{_{R_{m}D}}\right) .  \label{div_all_act}
\end{equation}
\end{corollary}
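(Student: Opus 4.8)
The plan is to read off the diversity gain directly from the asymptotic outage expression in \eqref{Pout_all_active_asymp}. Recall that the diversity gain is defined as
\begin{equation}
G_{d}=-\lim_{P_{M}\rightarrow \infty }\frac{\log P_{out}}{\log P_{M}},
\end{equation}
i.e., it is the exponent of $1/P_{M}$ that governs the decay of the outage probability. Each term in the sum over decoding sets in \eqref{Pout_all_active_asymp} is a constant (independent of $P_{M}$) multiplied by $\left( 1/P_{M}\right) ^{e_{n}}$, where
\begin{equation}
e_{n}=\sum\limits_{m\notin S\left( n\right) }q_{_{SR_{m}}}+\sum\limits_{m\in S\left( n\right) }q_{_{R_{m}D}}.
\end{equation}

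First I would argue that, since $P_{out}$ is a finite sum of nonnegative terms, as $P_{M}\rightarrow \infty$ the whole sum is dominated (up to a constant factor) by the term with the smallest exponent $e_{n}$: the ratio of the full sum to that dominant term tends to $1$ plus vanishing corrections of the form $\left( 1/P_{M}\right)^{e_{n'}-e_{\min}}$ for the remaining sets. Taking $-\log(\cdot)/\log P_{M}$ and passing to the limit therefore yields $G_{d}=\min_{n}e_{n}$, which is exactly \eqref{div_all_act}. One should also note that all the constant prefactors multiplying the $\left( 1/P_{M}\right)^{e_{n}}$ in \eqref{Pout_all_active_asymp} are strictly positive and finite (they are products of Gamma functions of positive arguments and of the positive quantities $\alpha,\beta,\bar h,\rho$), so none of the dominant terms vanishes and the argument is not vacuous; this uses the standing assumption $(\alpha_{_{AB}}-\beta_{_{AB}})\notin\mathbb{Z}$ only insofar as it was needed to make \eqref{Pout_asympt} and hence \eqref{cdf_sum} valid.

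The only real subtlety — and the step I would be most careful about — is making precise that the minimum over all $2^{N}$ decoding sets is the correct quantity, rather than some particular set such as the empty set or the full set. This is genuinely a minimization over competing error mechanisms: a large decoding set has few $S$-$R_m$ outages contributing $q_{_{SR_m}}$ but many $R_m$-$D$ links contributing $q_{_{R_m D}}$, while a small decoding set reverses this trade-off. Since $e_n$ depends on the set $S(n)$ in a non-monotone way through which links are "on the source side" versus "on the destination side," no single $S(n)$ dominates universally, and the honest statement is the minimum in \eqref{div_all_act}. I would simply remark that \eqref{div_all_act} follows by inspection of \eqref{Pout_all_active_asymp} together with the elementary fact that a sum of terms of the form $c_n P_M^{-e_n}$ with $c_n>0$ has decay exponent $\min_n e_n$; no further calculation is required.
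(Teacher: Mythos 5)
Your proposal is correct and follows essentially the same route as the paper: both apply the definition $G_{d}=-\lim_{P_{M}\rightarrow\infty}\log P_{out}/\log P_{M}$ to the asymptotic expression \eqref{Pout_all_active_asymp} and observe that the term with the smallest exponent of $1/P_{M}$ dominates the finite sum over decoding sets. Your added remarks on the strict positivity of the prefactors and on why no single decoding set dominates universally are sound refinements of the same argument, not a different approach.
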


\begin{proof}
We define the diversity order as
\begin{equation}
G_{d}=-\lim_{P_{M}\rightarrow \infty }\frac{\log P_{out}}{\log P_{M}}.
\end{equation}%
Hence, (\ref{div_all_act}) follows straightforwardly by observing that the
term which corresponds to the power of $P_{M}$ with the minimum of $\left(
\sum\limits_{m\notin S\left( n\right) }q_{_{SR_{m}}}+\sum\limits_{m\in
S\left( n\right) }q_{_{R_{m}D}}\right) $ dominates in the sum of (\ref%
{Pout_all_active_asymp}), when $P_{M}\rightarrow \infty $.
\end{proof}

\subsection{Outage Probability of Select-Max Relaying}

In the select-max protocol a single relay out of the $N$ available relays is
selected according to the selection rule in (\ref{sel_max}). Hence, the
outage probability of the relaying scheme under consideration is given by
\begin{equation}
P_{out}=P_{out}\left\{ R_{1}\cap ...\cap R_{N}\right\}
=\prod\limits_{b=1}^{N}P_{out}\left\{ R_{b}\right\}  \label{sel_max1}
\end{equation}%
where $P_{out}\left\{ R_{b}\right\} $ denotes the probability of outage when
only relay $R_{b}$ is active. Given that $R_{b}$ is active, an outage occurs
when either $R_{b}$ or $D$ have not decoded the information successfully,
i.e.,%
\begin{eqnarray}
P_{out}\left\{ R_{b}\right\} &=&\Pr \left\{ \left( \gamma
_{_{SR_{b}}}<\gamma _{th}\right) \cup \left( \gamma _{_{R_{b}D}}<\gamma
_{th}\right) \right\}  \notag \\
&=&1-\left( 1-\Pr \left\{ \tilde{h}_{_{SR_{b}}}<\frac{\frac{1}{P_{M}}}{\bar{h%
}_{_{SR_{b}}}\rho _{_{SR_{b}}}}\right\} \right) \left( 1-\Pr \left\{ \tilde{h%
}_{_{R_{b}D}}<\frac{\frac{1}{P_{M}}}{\bar{h}_{_{R_{b}D}}\rho _{_{R_{b}D}}}%
\right\} \right) .  \label{sel_max2}
\end{eqnarray}%
Hence, the probability of outage of the select-max relaying scheme is
obtained by combining (\ref{sel_max1}) with (\ref{sel_max2}).

\subsubsection{Exact Analysis}

The following theorem provides an accurate analytical expression for the
performance evaluation of the select-max relaying scheme.

\begin{theorem}
The probability of outage of a relay-assisted FSO system that employs the
select-max relaying protocol in Gamma-Gamma turbulence-induced fading is
given by
\begin{eqnarray}
P_{out} &=&\prod\limits_{b=1}^{N}\left( 1-\left( 1-\frac{1}{\Gamma \left(
\alpha _{_{SR_{b}}}\right) \Gamma \left( \beta _{_{SR_{b}}}\right) }%
G_{1,3}^{2,1}\left[ \frac{\alpha _{_{SR_{b}}}\beta _{_{SR_{b}}}}{\bar{h}%
_{_{SR_{b}}}\rho _{_{SR_{b}}}P_{M}}\left\vert
\begin{array}{c}
1 \\
\alpha _{_{SR_{b}}},\beta _{_{SR_{b}}},0 \\
\end{array}%
\right. \right] \right) \right.  \notag \\
&&\times \left. \left( 1-\frac{1}{\Gamma \left( \alpha _{_{R_{b}D}}\right)
\Gamma \left( \beta _{_{R_{b}D}}\right) }G_{1,3}^{2,1}\left[ \frac{\alpha
_{_{R_{b}D}}\beta _{_{R_{b}D}}}{\bar{h}_{_{R_{b}D}}\rho _{_{R_{b}D}}P_{M}}%
\left\vert
\begin{array}{c}
1 \\
\alpha _{_{R_{b}D}},\beta _{_{R_{b}D}},0 \\
\end{array}%
\right. \right] \right) \right) .  \label{Sel_max_ex}
\end{eqnarray}
\end{theorem}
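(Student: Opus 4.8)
The plan is to obtain (\ref{Sel_max_ex}) directly from the outage decomposition already recorded in (\ref{sel_max1})--(\ref{sel_max2}) by inserting the closed-form single-link Gamma--Gamma cdf (\ref{Pout_SISO_Meijer}) into each factor; in spirit this parallels the one-line derivation of the all-active exact expression (\ref{EGC_4}). The first task is to justify carefully the product form $P_{out}=\prod_{b=1}^{N}P_{out}\{R_b\}$. Under the selection rule (\ref{sel_max}), the activated relay is the one whose min equivalent SNR $\gamma_b=\min(\gamma_{SR_b},\gamma_{R_bD})$ is largest, so the end-to-end link is in outage if and only if $\max_{i}\gamma_i<\gamma_{th}$, that is, if and only if $\gamma_i<\gamma_{th}$ holds simultaneously for all $i\in\{1,\dots,N\}$. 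Because the $N$ candidate end-to-end paths are built from mutually disjoint sets of FSO links whose turbulence fading is statistically independent, the $\gamma_i$ are independent, and this simultaneous-outage probability factorizes into $\prod_{b=1}^{N}\Pr\{\gamma_b<\gamma_{th}\}=\prod_{b=1}^{N}P_{out}\{R_b\}$.

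Next I would expand each per-relay factor. Using $\{\gamma_b<\gamma_{th}\}=\{\gamma_{SR_b}<\gamma_{th}\}\cup\{\gamma_{R_bD}<\gamma_{th}\}$ and the independence of the two hops of path $b$, one gets $P_{out}\{R_b\}=1-(1-P_{out,SR_b})(1-P_{out,R_bD})$, exactly as in (\ref{sel_max2}); by (\ref{Pout_SISO_3}) each hop outage probability is the cdf of the turbulence component $\tilde h$ evaluated at $1/(\bar h\,\rho\,P_M)$ with the corresponding link subscripts. Finally, invoking the exact single-link evaluation (\ref{Pout_SISO_Meijer}), I would replace each of $P_{out,SR_b}$ and $P_{out,R_bD}$ by the normalized Meijer $G_{1,3}^{2,1}$-function with argument $\tfrac{\alpha\beta}{\bar h\rho P_M}$ and the appropriate $\alpha,\beta$ parameters, substitute into the product over $b$, and collect terms to arrive at (\ref{Sel_max_ex}).

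I do not anticipate a genuine technical obstacle: once the product structure is established, everything that follows is bookkeeping with the already-derived single-link formula. The only point that warrants care is the probabilistic reduction of the first step --- one must state explicitly that selecting the maximizer of $\gamma_i$ makes the overall outage event equivalent to the simultaneous outage of all $N$ end-to-end paths, and that both the outer product (over the $N$ relays) and the inner product (over the two hops of a given relay) are licensed by the mutual independence of the FSO links involved.
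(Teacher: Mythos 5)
Your argument is correct and follows essentially the same route as the paper, which simply combines (\ref{sel_max1}), (\ref{sel_max2}), and (\ref{Pout_SISO_Meijer}); your additional justification of the product form via the equivalence of the outage event to the simultaneous outage of all $N$ independent end-to-end paths is exactly the reasoning implicit in (\ref{sel_max1}).
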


\begin{proof}
The proof follows straightforwardly by combining (\ref{sel_max1}) with (\ref%
{sel_max2}) and (\ref{Pout_SISO_Meijer}).
\end{proof}

\subsubsection{Asymptotic Analysis}

In order to gain more physical insights into the performance of the relaying
protocol under consideration, we investigate its asymptotic behavior when $%
P_{M}\rightarrow \infty $, in the ensuing theorem and corollary.

\begin{theorem}
For high values of power margin, the outage probability of the select-max
relaying scheme can be approximated as%
\begin{equation}
P_{out}\approx \prod\limits_{b=1}^{N}\left( \frac{\frac{\Gamma \left(
p_{_{SR_{b}}}-q_{_{SR_{b}}}\right) }{q_{_{SR_{b}}}}}{\Gamma \left( \alpha
_{_{SR_{b}}}\right) \Gamma \left( \beta _{_{SR_{b}}}\right) }\left( \frac{%
\frac{\alpha _{_{SR_{b}}}\beta _{_{SR_{b}}}}{\bar{h}_{_{SR_{b}}}}}{\rho
_{_{SR_{b}}}P_{M}}\right) ^{q_{_{SR_{b}}}}+\frac{\frac{\Gamma \left(
p_{_{R_{b}D}}-q_{_{R_{b}D}}\right) }{q_{_{R_{b}D}}}}{\Gamma \left( \alpha
_{_{R_{b}D}}\right) \Gamma \left( \beta _{_{R_{b}D}}\right) }\left( \frac{%
\frac{\alpha _{_{R_{b}D}}\beta _{_{R_{b}D}}}{\bar{h}_{_{R_{b}D}}}}{\rho
_{_{R_{b}D}}P_{M}}\right) ^{q_{_{R_{b}D}}}\right) .  \label{Sel_max_asym}
\end{equation}
\end{theorem}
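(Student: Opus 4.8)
The plan is to linearise the exact single-relay outage probability in the high-power-margin regime and then take the product over the $N$ relays. From (\ref{sel_max1}) we have $P_{out}=\prod_{b=1}^{N}P_{out}\{R_{b}\}$; writing $A_{b}$ for $\Pr\{\tilde{h}_{_{SR_{b}}}<(\bar{h}_{_{SR_{b}}}\rho_{_{SR_{b}}}P_{M})^{-1}\}=P_{out,SR_{b}}$ and $B_{b}$ for $\Pr\{\tilde{h}_{_{R_{b}D}}<(\bar{h}_{_{R_{b}D}}\rho_{_{R_{b}D}}P_{M})^{-1}\}=P_{out,R_{b}D}$, equation (\ref{sel_max2}) reads $P_{out}\{R_{b}\}=1-(1-A_{b})(1-B_{b})=A_{b}+B_{b}-A_{b}B_{b}$.

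Next I would apply Theorem~1 to each intermediate link. For $P_{M}\to\infty$, and under the standing assumption $\alpha_{_{AB}}-\beta_{_{AB}}\notin\mathbb{Z}$, equation (\ref{Pout_asympt}) gives $A_{b}\approx\frac{\Gamma(p_{_{SR_{b}}}-q_{_{SR_{b}}})}{q_{_{SR_{b}}}\Gamma(\alpha_{_{SR_{b}}})\Gamma(\beta_{_{SR_{b}}})}\big(\tfrac{\alpha_{_{SR_{b}}}\beta_{_{SR_{b}}}}{\bar{h}_{_{SR_{b}}}\rho_{_{SR_{b}}}P_{M}}\big)^{q_{_{SR_{b}}}}$, which behaves as $P_{M}^{-q_{_{SR_{b}}}}$, and likewise $B_{b}$ behaves as $P_{M}^{-q_{_{R_{b}D}}}$. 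Hence the cross term $A_{b}B_{b}$ behaves as $P_{M}^{-(q_{_{SR_{b}}}+q_{_{R_{b}D}})}$, which is of strictly higher order than both $A_{b}$ and $B_{b}$ (both exponents being positive); dropping it leaves $P_{out}\{R_{b}\}\approx A_{b}+B_{b}$. Substituting the asymptotic expressions for $A_{b}$ and $B_{b}$ into this sum and then forming the product over $b=1,\dots,N$ as dictated by (\ref{sel_max1}) produces exactly (\ref{Sel_max_asym}).

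The main obstacle is to argue convincingly that retaining only $\prod_{b}(A_{b}+B_{b})$ is legitimate in the asymptotic sense claimed. Each retained factor $A_{b}+B_{b}$ is of order $P_{M}^{-\min(q_{_{SR_{b}}},q_{_{R_{b}D}})}$, so the retained product is of order $P_{M}^{-\sum_{b}\min(q_{_{SR_{b}}},q_{_{R_{b}D}})}$; by contrast every discarded contribution — the per-relay cross terms $A_{b}B_{b}$, the mixed cross-products that appear when the product over $b$ is expanded, and the higher-order remainders of the Meijer-$G$ series absorbed into Theorem~1 — carries at least one extra strictly positive power of $1/P_{M}$ relative to that order. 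Therefore the relative error tends to zero as $P_{M}\to\infty$, which is precisely the sense in which (\ref{Sel_max_asym}) is asserted; as a by-product the expression already exhibits the diversity order $\sum_{b}\min(q_{_{SR_{b}}},q_{_{R_{b}D}})$ invoked in the ensuing corollary. Note that, unlike the all-active case, no combining or relay-synchronisation analysis enters here, since only one end-to-end path is active per transmission slot.
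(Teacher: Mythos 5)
Your argument matches the paper's proof exactly: both drop the product term in $P_{out}\{R_{b}\}=A_{b}+B_{b}-A_{b}B_{b}$ as asymptotically negligible, substitute the single-link asymptotics of (\ref{Pout_asympt}), and take the product over $b$ via (\ref{sel_max1}). Your additional bookkeeping of the discarded terms' orders only makes explicit what the paper leaves implicit, so the proposal is correct and essentially identical in approach.
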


\begin{proof}
The proof starts by observing that as $P_{M}\rightarrow \infty $ the
probability of outage given relay $R_{b}$ is active, can be approximated by%
\begin{equation}
P_{out}\left\{ R_{b}\right\} \approx \Pr \left\{ \tilde{h}_{_{SR_{b}}}<\frac{%
\frac{1}{P_{M}}}{\bar{h}_{_{SR_{b}}}\rho _{_{SR_{b}}}}\right\} +\Pr \left\{
\tilde{h}_{_{R_{b}D}}<\frac{\frac{1}{P_{M}}}{\bar{h}_{_{R_{b}D}}\rho
_{_{R_{b}D}}}\right\} .  \label{Pout_R_b}
\end{equation}%
Hence, by combining (\ref{sel_max1}) with (\ref{Pout_asympt}) and (\ref%
{Pout_R_b}), the asymptotic expression in (\ref{Sel_max_asym}) is obtained.
This concludes the proof.
\end{proof}

\begin{corollary}
The diversity gain of a relay-assisted FSO system employing the select-max
relaying protocol and $N$ relays is given as%
\begin{equation}
G_{d}=\sum\limits_{b=1}^{N}\min \left( q_{_{SR_{b}}},q_{_{R_{b}D}}\right) .
\label{Div_ord_sel_max}
\end{equation}
\end{corollary}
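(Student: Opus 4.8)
The plan is to extract the diversity gain directly from the asymptotic outage expression in the preceding theorem (equation~(\ref{Sel_max_asym})). By definition, $G_d = -\lim_{P_M\to\infty} \frac{\log P_{out}}{\log P_M}$, so the first step is to take the logarithm of the product in (\ref{Sel_max_asym}), which turns it into the sum $\sum_{b=1}^{N}\log\left(A_b P_M^{-q_{_{SR_b}}} + B_b P_M^{-q_{_{R_bD}}}\right)$, where $A_b$ and $B_b$ are the constants (independent of $P_M$) collecting the Gamma-function and path-loss factors for the $S$-$R_b$ and $R_b$-$D$ links respectively. Since all these constants are strictly positive, each summand is well-defined for large $P_M$.

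Next I would analyze a single factor. For each $b$, as $P_M\to\infty$, the sum $A_b P_M^{-q_{_{SR_b}}} + B_b P_M^{-q_{_{R_bD}}}$ is dominated by whichever power of $P_M$ decays more slowly, i.e., the term with the smaller exponent $\min(q_{_{SR_b}}, q_{_{R_bD}})$. Hence $\log\left(A_b P_M^{-q_{_{SR_b}}} + B_b P_M^{-q_{_{R_bD}}}\right) = -\min(q_{_{SR_b}}, q_{_{R_bD}})\log P_M + O(1)$ (if the two exponents are equal the constant factor just becomes $A_b+B_b$, still $O(1)$). Dividing by $\log P_M$ and letting $P_M\to\infty$, each factor contributes exactly $\min(q_{_{SR_b}}, q_{_{R_bD}})$ to the diversity order. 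Summing over $b=1,\dots,N$ and negating yields $G_d = \sum_{b=1}^{N}\min(q_{_{SR_b}}, q_{_{R_bD}})$, which is (\ref{Div_ord_sel_max}).

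There is no real obstacle here — the argument is the same routine dominant-term extraction used in the proof of the all-active corollary, just applied to a product of two-term sums rather than a single multi-term sum. The only point worth a sentence of care is that the logarithm of a product becomes a sum of logarithms, and that within each logarithm the surviving exponent is the \emph{minimum} of the two link exponents (because larger exponents mean faster decay and hence negligible contribution), whereas in the all-active case the overall minimum was taken \emph{across} the decoding-set terms. Making that distinction explicit is the substance of the proof; everything else is bookkeeping.
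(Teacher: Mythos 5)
Your proposal is correct and follows essentially the same route as the paper's own proof: identify, within each of the $N$ two-term factors of (\ref{Sel_max_asym}), the term whose exponent is $\min\left(q_{_{SR_{b}}},q_{_{R_{b}D}}\right)$ as dominant when $P_{M}\rightarrow\infty$, and then read off the diversity order as the sum of these minima from the product of the dominating terms. Your explicit logarithm bookkeeping and the remark on equal exponents merely spell out what the paper leaves implicit.
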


\begin{proof}
The proof follows straightforwardly from (\ref{Sel_max_asym}). When $%
P_{M}\rightarrow \infty $, the term that corresponds to the power of $P_{M}$
with the minimum of $\left( q_{_{SR_{b}}},q_{_{R_{b}D}}\right) $ dominates
in the sum inside the product. Hence, after taking the product of the
dominating terms, the diversity order is obtained.
\end{proof}

\subsection{Outage Probability of DSSC Relaying}

In the DSSC protocol, the selection of the single relay which takes part in
the communication is based on (\ref{DSSC_1}) and (\ref{DSSC_2}). Hence, an
outage occurs when there is an outage either in the end-to-end link of the
first relay, given that the first relay is selected in the $j$th
transmission slot, or in the end-to-end link of the second relay, given that
the second relay is selected in the $j$th transmission slot, i.e.,
\begin{equation}
P_{out}=\Pr \left\{ \left( R_{b}^{j}=R_{1}\right) \cap \left( \gamma
_{1}^{j}<\gamma _{th}\right) \right\} +\Pr \left\{ \left(
R_{b}^{j}=R_{2}\right) \cap \left( \gamma _{2}^{j}<\gamma _{th}\right)
\right\}  \label{P_DSSC_1}
\end{equation}%
Following the analysis of \cite[Sec. (9.9.1.2)]{B:Alouini}, the above
equation can be rewritten as
\begin{equation}
P_{out}=\left\{
\begin{array}{c}
\frac{\Pr \left\{ \gamma _{1}<T\right\} \Pr \left\{ \gamma _{2}<T\right\} }{%
\Pr \left\{ \gamma _{1}<T\right\} +\Pr \left\{ \gamma _{2}<T\right\} }\left(
\Pr \left\{ \gamma _{1}<\gamma _{th}\right\} +\Pr \left\{ \gamma _{2}<\gamma
_{th}\right\} \right) ,~~~~~~~\gamma _{th}\leq T \\
\multicolumn{1}{l}{\frac{\Pr \left\{ \gamma _{1}<T\right\} \Pr \left\{
\gamma _{2}<T\right\} }{\Pr \left\{ \gamma _{1}<T\right\} +\Pr \left\{
\gamma _{2}<T\right\} }\left( \Pr \left\{ \gamma _{1}<\gamma _{th}\right\}
+\Pr \left\{ \gamma _{2}<\gamma _{th}\right\} -2\right)} \\
\multicolumn{1}{r}{+\frac{\left( \Pr \left\{ \gamma _{1}<\gamma
_{th}\right\} \Pr \left\{ \gamma _{2}<T\right\} +\Pr \left\{ \gamma
_{1}<T\right\} \Pr \left\{ \gamma _{2}<\gamma _{th}\right\} \right) }{\Pr
\left\{ \gamma _{1}<T\right\} +\Pr \left\{ \gamma _{2}<T\right\} },~~\gamma
_{th}>T}%
\end{array}%
\right.  \label{P_DSSC_3}
\end{equation}

\subsubsection{Exact Analysis}

The following theorem provides an accurate analytical expression for the
performance of the DSSC relaying scheme.

\begin{theorem}
The probability of outage of a relay-assisted FSO system that employs the
DSSC relaying protocol is given by%
\begin{equation}
P_{out}=\left\{
\begin{array}{c}
\frac{F_{h_{1}}\left( \frac{1}{\tilde{T}}\right) F_{h_{2}}\left( \frac{1}{%
\tilde{T}}\right) }{F_{h_{1}}\left( \frac{1}{\tilde{T}}\right)
+F_{h_{2}}\left( \frac{1}{\tilde{T}}\right) }\left( F_{h_{1}}\left( \frac{1}{%
P_{M}}\right) +F_{h_{2}}\left( \frac{1}{P_{M}}\right) \right) ,~~~~~~~\bar{T}%
\leq P_{M} \\
\multicolumn{1}{l}{\frac{F_{h_{1}}\left( \frac{1}{\tilde{T}}\right)
F_{h_{2}}\left( \frac{1}{\tilde{T}}\right) }{F_{h_{1}}\left( \frac{1}{\tilde{%
T}}\right) +F_{h_{2}}\left( \frac{1}{\tilde{T}}\right) }\left(
F_{h_{1}}\left( \frac{1}{P_{M}}\right) +F_{h_{2}}\left( \frac{1}{P_{M}}%
\right) -2\right)} \\
\multicolumn{1}{r}{+\frac{\left( F_{h_{1}}\left( \frac{1}{P_{M}}\right)
F_{h_{2}}\left( \frac{1}{\tilde{T}}\right) +F_{h_{1}}\left( \frac{1}{\tilde{T%
}}\right) F_{h_{2}}\left( \frac{1}{P_{M}}\right) \right) }{F_{h_{1}}\left(
\frac{1}{\tilde{T}}\right) +F_{h_{2}}\left( \frac{1}{\tilde{T}}\right) },~~%
\bar{T}>P_{M}}%
\end{array}%
\right.  \label{DSSC_exact}
\end{equation}%
where
\begin{eqnarray}
F_{h_{i}}\left( x\right) &=&1-\left( 1-\frac{1}{\Gamma \left( \alpha
_{_{SR_{i}}}\right) \Gamma \left( \beta _{_{SR_{i}}}\right) }G_{1,3}^{2,1}%
\left[ \frac{\alpha _{_{SR_{i}}}\beta _{_{SR_{i}}}}{\bar{h}_{_{SR_{i}}}\rho
_{_{SR_{i}}}x}\left\vert
\begin{array}{c}
1 \\
\alpha _{_{SR_{i}}},\beta _{_{SR_{i}}},0 \\
\end{array}%
\right. \right] \right)  \notag \\
&&\times \left( 1-\frac{1}{\Gamma \left( \alpha _{_{R_{i}D}}\right) \Gamma
\left( \beta _{_{R_{i}D}}\right) }G_{1,3}^{2,1}\left[ \frac{\alpha
_{_{R_{i}D}}\beta _{_{R_{i}D}}}{\bar{h}_{_{R_{i}D}}\rho _{_{R_{i}D}}x}%
\left\vert
\begin{array}{c}
1 \\
\alpha _{_{R_{i}D}},\beta _{_{R_{i}D}},0 \\
\end{array}%
\right. \right] \right)  \label{cdf_min}
\end{eqnarray}%
and $\bar{T}=\frac{\eta T_{b}P_{t}}{\sqrt{N_{0}T}}$.
\end{theorem}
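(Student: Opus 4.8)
The plan is to reduce the whole statement to the single-link formula (\ref{Pout_SISO_Meijer}) by feeding closed-form expressions for $\Pr\{\gamma_i<\gamma_{th}\}$ and $\Pr\{\gamma_i<T\}$ into the switch-and-stay identity (\ref{P_DSSC_3}), which is already in hand. First I would evaluate the ``$\min$ equivalent SNR'' events. By the definition (\ref{min}) and the mutual independence of the four intermediate links $SR_1,R_1D,SR_2,R_2D$, for any threshold $\tau$ we have $\Pr\{\gamma_i<\tau\}=1-\Pr\{\gamma_{_{SR_i}}\ge\tau\}\Pr\{\gamma_{_{R_iD}}\ge\tau\}$, and each factor is one minus a single-link outage probability of the form (\ref{Pout_SISO_3})--(\ref{Pout_SISO_Meijer}); by (\ref{inst_SNR}) the only effect of changing the threshold is to replace the power margin $P_M$ by $\eta T_bP_t/\sqrt{N_0\tau}$ in those expressions. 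Taking $\tau=\gamma_{th}$ leaves $P_M$, and taking $\tau=T$ produces $\bar T=\eta T_bP_t/\sqrt{N_0T}$, so that, with $F_{h_i}$ the function written out in (\ref{cdf_min}) (here $\tilde T\equiv\bar T$),
\[
\Pr\{\gamma_i<\gamma_{th}\}=F_{h_i}\!\left(\frac{1}{P_M}\right),\qquad \Pr\{\gamma_i<T\}=F_{h_i}\!\left(\frac{1}{\bar T}\right),\qquad i=1,2.
\]

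Next I would translate the case distinction in (\ref{P_DSSC_3}). Since $\tau\mapsto\eta T_bP_t/\sqrt{N_0\tau}$ is strictly decreasing, the condition $\gamma_{th}\le T$ is equivalent to $P_M\ge\bar T$; hence the first branch of (\ref{P_DSSC_3}) becomes the regime $\bar T\le P_M$ and the second branch the regime $\bar T>P_M$, matching (\ref{DSSC_exact}). Substituting the two identities above into (\ref{P_DSSC_3}) term by term then replaces every occurrence of $\Pr\{\gamma_i<T\}$ by $F_{h_i}(1/\bar T)$ and every occurrence of $\Pr\{\gamma_i<\gamma_{th}\}$ by $F_{h_i}(1/P_M)$, which is exactly (\ref{DSSC_exact}).

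The only ingredient that is not a mechanical substitution is the identity (\ref{P_DSSC_3}) itself, i.e., computing the stationary selection probabilities $\Pr\{R_b^j=R_1\}$ and $\Pr\{R_b^j=R_2\}$ of the two-state switching chain defined by (\ref{DSSC_1})--(\ref{DSSC_2}) and then conditioning (\ref{P_DSSC_1}) on them. This is the standard distributed switch-and-stay combining computation of \cite[Sec. (9.9.1.2)]{B:Alouini}, which I would invoke directly; within this theorem it is a given. The remaining work is pure bookkeeping: keeping all four intermediate links mutually independent so that the product form of $F_{h_i}$ in (\ref{cdf_min}) is legitimate, and lining up the arguments $1/P_M$ and $1/\bar T$ against the Meijer-$G$ expressions. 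I therefore anticipate no genuine obstacle beyond this substitution-and-case-matching step.
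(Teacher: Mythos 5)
Your proposal is correct and follows essentially the same route as the paper: express the cdf of the min equivalent SNR via independence of the two hops, convert the SNR thresholds $\gamma_{th}$ and $T$ into the equivalent margins $P_M$ and $\bar{T}$ through the variable transformation implicit in (\ref{Pout_SISO}), and substitute the resulting $F_{h_i}$ expressions into the pre-established switch-and-stay identity (\ref{P_DSSC_3}). Your explicit verification that $\gamma_{th}\leq T$ maps to $\bar{T}\leq P_M$ is a detail the paper leaves implicit, but it is the same argument.
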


\begin{proof}
We first note that the cdf of the $\min $ equivalent SNR defined in (\ref%
{min}) can be expressed according to \cite[pp. 141]{B:Pap} as%
\begin{equation}
F_{\gamma _{i}}\left( x\right) =1-\left( 1-F_{\gamma _{_{SR_{i}}}}\left(
x\right) \right) \left( 1-F_{\gamma _{_{R_{i}D}}}\left( x\right) \right)
\end{equation}%
which is equivalent, after a variable transformation, to%
\begin{equation}
F_{\gamma _{i}}\left( x\right) =1-\left( 1-P_{out,SR_{i}}\left( \frac{\sqrt{%
N_{0}x}}{\rho _{_{SR_{i}}}\eta T_{b}P_{t}}\right) \right) \left(
1-P_{out,R_{i}D}\left( \frac{\sqrt{N_{0}x}}{\rho _{_{R_{i}D}}\eta T_{b}P_{t}}%
\right) \right)
\end{equation}%
Using (\ref{Pout_SISO}) the cdf of the $\min $ equivalent channel gain is
derived as (\ref{cdf_min}) and, thus, according to (\ref{P_DSSC_3}), (\ref%
{DSSC_exact}) is obtained. This concludes the proof.
\end{proof}

\begin{corollary}
The performance of the DSSC relaying scheme is minimized when $\bar{T}=P_{M}$
and in that case it becomes equal to that of the select-max scheme for two
relays, $R_{1}$ and $R_{2}$.
\end{corollary}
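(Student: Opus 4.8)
The plan is to regard the switching threshold $T$ (equivalently $\bar{T}$) as a free parameter in (\ref{P_DSSC_3}) and to show that the DSSC outage probability is bounded below by its value at $T=\gamma_{th}$, i.e., at $\bar{T}=P_{M}$. Write $x_{i}:=\Pr\{\gamma_{i}<T\}=F_{h_{i}}(1/\bar{T})$ and $a_{i}:=\Pr\{\gamma_{i}<\gamma_{th}\}=F_{h_{i}}(1/P_{M})$ for $i=1,2$, where $F_{h_{i}}$ is the strictly increasing cdf appearing in (\ref{cdf_min}). Then $x_{i}$ is strictly increasing in $T$, the quantities $a_{i}$ do not depend on $T$, and $\bar{T}\le P_{M}\Leftrightarrow T\ge\gamma_{th}\Leftrightarrow x_{i}\ge a_{i}$, with $\bar{T}=P_{M}\Leftrightarrow x_{1}=a_{1},\,x_{2}=a_{2}$. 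In this notation (\ref{P_DSSC_3}) reads $P_{out}=\frac{x_{1}x_{2}}{x_{1}+x_{2}}(a_{1}+a_{2})$ when $\bar{T}\le P_{M}$ and $P_{out}=\frac{x_{1}x_{2}}{x_{1}+x_{2}}(a_{1}+a_{2}-2)+\frac{a_{1}x_{2}+a_{2}x_{1}}{x_{1}+x_{2}}$ when $\bar{T}>P_{M}$, so it suffices to prove $P_{out}\ge a_{1}a_{2}$ in both regimes, with equality only at $x_{1}=a_{1},\,x_{2}=a_{2}$.

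In the regime $\bar{T}\le P_{M}$ this is immediate: the map $(u,v)\mapsto uv/(u+v)$ is strictly increasing in each of its positive arguments, so $x_{i}\ge a_{i}$ gives $\frac{x_{1}x_{2}}{x_{1}+x_{2}}\ge\frac{a_{1}a_{2}}{a_{1}+a_{2}}$, and multiplying by the $T$-independent factor $a_{1}+a_{2}$ yields $P_{out}\ge a_{1}a_{2}$, with equality forcing $x_{1}=a_{1}$ and $x_{2}=a_{2}$.

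The regime $\bar{T}>P_{M}$ (so $0<x_{i}\le a_{i}$) is the delicate part, because there the two additive terms of $P_{out}$ have opposite monotonicity in $T$, so a direct derivative argument is inconclusive. Instead I would clear the positive denominator $x_{1}+x_{2}$ and establish the algebraic identity
\[
(x_{1}+x_{2})\bigl(P_{out}-a_{1}a_{2}\bigr)=x_{1}(1-a_{1})(a_{2}-x_{2})+x_{2}(1-a_{2})(a_{1}-x_{1}),
\]
obtained by writing $a_{1}+a_{2}-2=-(1-a_{1})-(1-a_{2})$ and regrouping. For a finite power margin $a_{i}\in(0,1)$, and in this regime $x_{i}\in(0,a_{i}]$, so every factor on the right-hand side is nonnegative; hence $P_{out}\ge a_{1}a_{2}$, and since $1-a_{i}>0$ and $x_{i}>0$ the right-hand side vanishes exactly when $x_{1}=a_{1},\,x_{2}=a_{2}$, i.e., on the boundary $\bar{T}=P_{M}$. (The limiting case $T\to0$, where $x_{i}\to0$, is handled separately: the first term vanishes and $P_{out}\to\frac{a_{1}x_{2}+a_{2}x_{1}}{x_{1}+x_{2}}\ge\min(a_{1},a_{2})\ge a_{1}a_{2}$, so it does not disturb the conclusion.)

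Combining the two regimes, $P_{out}\ge a_{1}a_{2}$ for all $\bar{T}$, with equality precisely at $\bar{T}=P_{M}$; thus the DSSC outage is minimized there and equals $a_{1}a_{2}=F_{h_{1}}(1/P_{M})F_{h_{2}}(1/P_{M})$, which is also what (\ref{DSSC_exact}) reduces to upon setting $\bar{T}=P_{M}$. It remains to recognize this value as the select-max outage for $R_{1},R_{2}$: by (\ref{sel_max2}), $P_{out}\{R_{b}\}=\Pr\{(\gamma_{SR_{b}}<\gamma_{th})\cup(\gamma_{R_{b}D}<\gamma_{th})\}=\Pr\{\gamma_{b}<\gamma_{th}\}=a_{b}$, so (\ref{sel_max1}) with $N=2$ gives $P_{out}^{\text{sel-max}}=a_{1}a_{2}$, which coincides with the $\bar{T}=P_{M}$ specialization of (\ref{DSSC_exact}) (equivalently, with (\ref{Sel_max_ex}) for $N=2$). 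The only real obstacle is spotting the factorization in the $\bar{T}>P_{M}$ regime; once it is in hand, both the lower bound and the uniqueness of the minimizer $\bar{T}=P_{M}$ follow at once.
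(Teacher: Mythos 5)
Your proof is correct, and it takes a genuinely different route from the paper's. The paper does not actually prove that $\bar{T}=P_{M}$ minimizes the outage: it defers that claim to the standard switch-and-stay analysis in \cite[Ch. 9.9.1.1]{B:Alouini}, then substitutes $\bar{T}=P_{M}$ into (\ref{DSSC_exact}) and observes that the result is $\prod_{i=1}^{2}\left( 1-\left( 1-P_{out,SR_{i}}\right) \left( 1-P_{out,R_{i}D}\right) \right)$, i.e., (\ref{Sel_max_ex}) with $N=2$. You instead prove the optimality from first principles: with $x_{i}=F_{h_{i}}(1/\bar{T})$ and $a_{i}=F_{h_{i}}(1/P_{M})$ you establish the global bound $P_{out}\geq a_{1}a_{2}$ in both branches of (\ref{P_DSSC_3}), using the monotonicity of $(u,v)\mapsto uv/(u+v)$ when $\bar{T}\leq P_{M}$ and, when $\bar{T}>P_{M}$, the identity
\begin{equation*}
(x_{1}+x_{2})\left( P_{out}-a_{1}a_{2}\right) =x_{1}(1-a_{1})(a_{2}-x_{2})+x_{2}(1-a_{2})(a_{1}-x_{1}),
\end{equation*}
which I have verified by direct expansion; all factors on the right are nonnegative in that regime, and you correctly treat the degenerate case $x_{1}+x_{2}\rightarrow 0$ separately. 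Your approach buys a self-contained argument that also pins down the uniqueness of the minimizer $\bar{T}=P_{M}$, something the paper only asserts by citation; the paper's route buys brevity and consistency with the established SSC literature. The final identification with select-max is essentially the same in both: $P_{out}\left\{ R_{b}\right\} =\Pr \left\{ \gamma _{b}<\gamma _{th}\right\} =a_{b}$, so the minimized DSSC outage $a_{1}a_{2}$ coincides with (\ref{Sel_max_ex}) for $N=2$.
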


\begin{proof}
Following the analysis in \cite[(Ch. 9.9.1.1)]{B:Alouini}, the performance
of DSSC relaying is minimized when $\bar{T}=P_{M}$ is set. In that case, (%
\ref{DSSC_exact}) yields%
\begin{equation}
P_{out}=\prod\limits_{i=1}^{2}\left( 1-\left( 1-P_{out,SR_{i}}\right) \left(
1-P_{out,R_{i}D}\right) \right)  \label{optimum_DSSC}
\end{equation}%
which is equivalent to (\ref{Sel_max_ex}) when $N=2$. This concludes the
proof.
\end{proof}

\subsubsection{Asymptotic Analysis}

In order to gain more physical insights of the DSSC protocol with optimized
threshold, we investigate the asymptotic behavior of its performance when $%
P_{M}\rightarrow \infty $ in the ensuing corrolary.

\begin{corollary}
The minimum outage probability of the DSSC relaying protocol can be
approximated at the high power margin regime, by (\ref{Sel_max_asym}) with $%
N=2$, and the maximum achieved diversity gain is given by%
\begin{equation}
G_{d}=\sum\limits_{i=1}^{2}\min \left( q_{_{SR_{i}}},q_{_{R_{i}D}}\right) .
\label{Div_order_DSSC_opt}
\end{equation}
\end{corollary}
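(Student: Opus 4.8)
The plan is to obtain this corollary as an immediate consequence of the two results already established for the DSSC and select-max protocols, with essentially no new calculation required. First I would recall the preceding corollary, which showed (at the exact, non-asymptotic level) that the DSSC outage probability in (\ref{DSSC_exact}) is minimized by the threshold choice $\bar{T}=P_{M}$, and that in this case it collapses to (\ref{optimum_DSSC}), which is precisely the select-max outage expression (\ref{Sel_max_ex}) specialized to $N=2$. Hence, for every value of $P_{M}$, the minimum DSSC outage probability is identically equal to the outage probability of a two-relay select-max system.

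Given this exact identity, the high-power-margin approximation transfers verbatim: I would simply apply the select-max asymptotic theorem, equation (\ref{Sel_max_asym}), with $N=2$, to conclude that the minimum DSSC outage probability is approximated in the regime $P_{M}\rightarrow\infty$ by the right-hand side of (\ref{Sel_max_asym}) evaluated at $N=2$.

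For the diversity gain claim I would proceed in the same fashion, invoking the select-max diversity corollary (\ref{Div_ord_sel_max}) with $N=2$; alternatively, one may re-derive it directly from (\ref{Sel_max_asym}) at $N=2$ using $G_{d}=-\lim_{P_{M}\rightarrow\infty}\log P_{out}/\log P_{M}$: inside each of the two factors of the product the term carrying the smaller exponent $\min(q_{_{SR_{i}}},q_{_{R_{i}D}})$ dominates as $P_{M}\rightarrow\infty$, so the product scales as $P_{M}^{-(\min(q_{_{SR_{1}}},q_{_{R_{1}D}})+\min(q_{_{SR_{2}}},q_{_{R_{2}D}}))}$, which yields (\ref{Div_order_DSSC_opt}). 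The only point needing care --- and it is the closest thing to an obstacle --- is to note that the optimality of $\bar{T}=P_{M}$ was proven at the exact level, so the asymptotic statement legitimately inherits it, and the word ``maximum'' in ``maximum achieved diversity gain'' then refers to this optimized-threshold configuration of DSSC.
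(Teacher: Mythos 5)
Your proposal is correct and follows exactly the paper's own route: the paper likewise combines the exact identity (\ref{optimum_DSSC}) between optimized-threshold DSSC and two-relay select-max with the asymptotic expression (\ref{Sel_max_asym}) and the diversity result (\ref{Div_ord_sel_max}). Your added remark that the optimality of $\bar{T}=P_{M}$ holds at the exact level, so the asymptotic statement inherits it, is a sensible clarification but not a departure from the paper's argument.
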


\begin{proof}
The proof follows straightforwardly by combining (\ref{optimum_DSSC}) with (%
\ref{Sel_max_asym}) and (\ref{Div_ord_sel_max}).
\end{proof}

\section{Optimal Power Allocation\label{OPA}}

In this section, we are interested in optimizing the optical power resources
in both the $S$-$R_{i}$ and $R_{i}$-$D$ links, in order to minimize the
outage probability of the relay-assisted FSO system for a given total
optical power. Hence, in the following, we optimize the parameters $\rho
_{_{SR_{i}}}$ and $\rho _{_{R_{i}D}}$ for each of the relaying protocols
under consideration.

\subsection{Power Allocation in All-Active Protocol}

Since in the all-active scheme the power is divided among all the underlying
links, the minimization of its outage probability is subject to two
constraints; the total power budget of all links is equal to $P_{t}$ and the
optical power emitted from each transmitter is less than $P_{t}$.
Consequently, the optimum power allocation can be found by solving the
following optimization problem%
\begin{equation}
\begin{array}{c}
\min P_{out} \\
\text{subject to }\left\{
\begin{array}{c}
\sum_{m=1}^{N}\left( \rho _{_{SR_{m}}}+\rho _{_{R_{m}D}}\right) =1 \\
0<\rho _{_{SR_{m}}}\leq 1,~~m=1,...N \\
0<\rho _{_{R_{m}D}}\leq 1,~~m=1,...N%
\end{array}%
\right.%
\end{array}
\label{all_act_pow_alloc}
\end{equation}%
where $P_{out}$ is given by (\ref{EGC_4}). It should be noted that that the
above optimization problem is convex problem. This can be explained as
follows. Since the objective function is an outage probability, it is convex
according to \cite{J:Hasna}. Furthermore, since all the constraints are
linear, they form a convex set \cite{B:Nocedal}, which leads to a convex
optimization problem and, thus, a unique optimal solution.

Using the exact outage expression in (\ref{EGC_4}), it is difficult to find
the optimum solution for the problem in (\ref{all_act_pow_alloc}), even with
numerical methods, due to the involvement of the Meijer's G-functions.
Therefore, the asymptotic expression of (\ref{Pout_all_active_asymp}) is
used as objective function instead and hence the optimization problem is
reformulated as%
\begin{equation}
\begin{array}{r}
\min \left( \sum\limits_{n=1}^{2^{N}}\frac{\left( \frac{1}{P_{M}}\right)
^{\left( \sum\limits_{m\notin S\left( n\right)
}q_{_{SR_{m}}}+\sum\limits_{m\in S\left( n\right) }q_{_{R_{m}D}}\right) }}{%
\left( \sum\limits_{m\in S\left( n\right) }q_{_{R_{m}D}}\right) \Gamma
\left( \sum\limits_{m\in S\left( n\right) }q_{_{R_{m}D}}\right) }%
\prod\limits_{m\notin S\left( n\right) }\left( \frac{\pi \Gamma \left(
p_{_{SR_{m}}}-q_{_{SR_{m}}}\right) \left( \frac{\alpha _{_{SR_{m}}}\beta
_{_{SR_{m}}}}{\bar{h}_{_{SR_{m}}}\rho _{_{SR_{m}}}}\right) ^{q_{_{SR_{m}}}}}{%
\Gamma \left( \alpha _{_{SR_{m}}}\right) \Gamma \left( \beta
_{_{SR_{m}}}\right) q_{_{SR_{m}}}}\right) \right. \\
\left. \times \prod\limits_{m\in S\left( n\right) }\left( \frac{\left( \frac{%
\alpha _{_{R_{m}D}}\beta _{_{R_{m}D}}}{\bar{h}_{_{R_{m}D}}\rho _{_{R_{m}D}}}%
\right) ^{q_{_{R_{m}D}}}\Gamma \left( q_{_{R_{m}D}}\right) \Gamma \left(
p_{_{R_{m}D}}-q_{_{R_{m}D}}\right) }{\Gamma \left( \alpha
_{_{R_{m}D}}\right) \Gamma \left( \beta _{_{R_{m}D}}\right) }\right) \right)
\\
\multicolumn{1}{c}{\text{subject to }\left\{
\begin{array}{c}
\sum_{m=1}^{N}\left( \rho _{_{SR_{m}}}+\rho _{_{R_{m}D}}\right) =1 \\
0<\rho _{_{SR_{m}}}\leq 1,~~m=1,...N \\
0<\rho _{_{R_{m}D}}\leq 1,~~m=1,...N%
\end{array}%
\right.}%
\end{array}
\label{all_act_pow_alloc2}
\end{equation}%
which is a geometric program that can be numerically solved using numerical
optimization techniques, such as the interior point method \cite[Sec. 14]%
{B:Nocedal}.

Since the derivation of the exact solution is cumbersome and motivated by
the dependence of the outage probability on the link distance, the following
suboptimal power allocation scheme for all-active relaying is proposed.

\begin{proposition}
For all-active relaying, the fraction of the total optical power which is
allocated to each link is given by
\begin{equation}
\rho _{_{SR_{i}}}=\frac{d_{_{SR_{i}}}}{\sum\limits_{m=1}^{N}\left(
d_{_{SR_{m}}}+d_{_{R_{m}D}}\right) }\text{ and }\rho _{_{R_{i}D}}=\frac{%
d_{_{R_{i}D}}}{\sum\limits_{m=1}^{N}\left(
d_{_{SR_{m}}}+d_{_{R_{m}D}}\right) }  \label{dist_rule_all_act}
\end{equation}%
for the $S$-$R_{i}$ and $R_{i}$-$D$ links, respectively, with $i=1,...,N$.
\end{proposition}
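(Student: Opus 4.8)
The plan is to prove the proposition in two stages: first to check that the allocation in (\ref{dist_rule_all_act}) is a feasible point of the optimization problem (\ref{all_act_pow_alloc}), and then to supply the heuristic argument that connects the proportional-to-distance rule to the asymptotic outage behavior established in (\ref{Pout_asympt}) and (\ref{Pout_all_active_asymp}). Since the scheme is deliberately a \emph{suboptimal} one, the substance of the proof is this feasibility-plus-motivation reasoning rather than an optimality claim.

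For the feasibility stage, observe that all $2N$ fractions in (\ref{dist_rule_all_act}) share the common denominator $\sum_{m=1}^{N}(d_{SR_m}+d_{R_mD})$, so summing them over every $S$-$R_m$ and $R_m$-$D$ link yields $\sum_{m=1}^{N}(\rho_{SR_m}+\rho_{R_mD})=1$, which is exactly the equality constraint. Every link distance is strictly positive, so each $\rho$ is strictly positive, and every individual distance is bounded above by the full sum in the denominator, so each $\rho\le1$; hence the allocation lies in the feasible set of (\ref{all_act_pow_alloc}). As a consistency check, in the fully symmetric geometry (all $S$-$R$ and all $R$-$D$ distances equal) the rule collapses to the equal split $\rho=1/(2N)$ assumed in \cite{J:UysalRelays}, as it should.

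For the motivation stage, I would start from the per-hop asymptotic outage (\ref{Pout_asympt}), which shows that the outage of link $AB$ is a strictly decreasing function of the product $\bar{h}_{AB}\rho_{AB}P_M$. As already noted in the text, both the path-loss coefficient $\bar{h}_{AB}$ (through the geometric-spread and Beer--Lambert factors of (\ref{geom_loss})) and the turbulence strength (through the Rytov variance $\sigma_R^2\propto d_{AB}^{11/6}$, hence the decay exponent $q_{AB}$) worsen monotonically as the link distance $d_{AB}$ grows. Inserting this into (\ref{Pout_all_active_asymp}), a longer hop penalizes the overall outage both through a smaller coefficient $\bar{h}_{AB}\rho_{AB}$ and, more severely, through a smaller diversity exponent; the design principle that follows is to steer optical power toward the longer hops. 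Allocating power in direct proportion to hop distance is the simplest rule that implements this principle while automatically satisfying the sum constraint, and this is the argument I would present.

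The main obstacle is that no rigorous optimality statement is available here. The exact problem (\ref{all_act_pow_alloc}) admits only a numerical (interior-point) solution; the asymptotic reformulation (\ref{all_act_pow_alloc2}) is a geometric program whose minimizer depends on the coefficients $\bar{h}_{AB}$ and the exponents $q_{AB}$ in a considerably more intricate way than a bare proportionality to $d_{AB}$; and the proposed rule matches neither. Thus the proposition can only be established as a well-defined, constraint-feasible, physically motivated heuristic, with its closeness to the true optimum left to be demonstrated numerically in Section \ref{NR}. Were a sharper statement desired, the remaining work would be to bound the performance loss of the distance rule relative to the geometric-program optimum in the regime where $q_{AB}$ is approximately affine in $d_{AB}$, but that goes beyond what the stated proposition requires.
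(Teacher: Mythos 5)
Your reading is exactly right: the paper supplies no proof for this proposition at all---it is introduced purely as a distance-motivated suboptimal heuristic (``motivated by the dependence of the outage probability on the link distance'') whose merit is only demonstrated numerically in Section~\ref{NR}. Your feasibility check (the fractions sum to one, each lies in $(0,1]$, and the rule reduces to the equal split $1/(2N)$ in the symmetric case) together with the monotonicity argument from (\ref{Pout_asympt}) and (\ref{geom_loss}) captures everything that can honestly be claimed, and your explicit acknowledgement that no optimality statement is available matches the paper's own framing.
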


\subsection{Power Allocation in Select-Max Protocol}

Similar to the all-active scheme, the outage probability of the select-max
protocol can also be minimized by optimizing the optical power resources
which are allocated to each of the links. However, in this scheme, both the
objective function and the constraints are different from the problem in (%
\ref{all_act_pow_alloc}).

Since the total optical power is divided only between the $S$-$R_{b}$ and $%
R_{b}$-$D$ links of the active relay, the problem can be formulated as%
\begin{equation}
\begin{array}{c}
\min P_{out}\left\{ R_{b}\right\} \\
\text{subject to }\left\{
\begin{array}{c}
\rho _{_{SR_{b}}}+\rho _{_{R_{b}D}}=1 \\
0<\rho _{_{SR_{b}}}\leq 1 \\
0<\rho _{_{R_{b}D}}\leq 1%
\end{array}%
\right. ,%
\end{array}
\label{Sel_Max_pow_all_1}
\end{equation}%
where $P_{out}\left\{ R_{b}\right\} $ is the the probability of outage when $%
R_{b}$ is active, given by (\ref{sel_max2}). Based on the same reasoning as
in the previous relaying scheme, we conclude that the above optimization
problem is convex, thus, leading to a unique optimal solution.

Due to the involvement of the Meijer's G-functions, it is again difficult to
find the optimum solution if the exact expression in (\ref{sel_max2}) is
used as objective function, even with numerical methods. Therefore, the
asymptotic expression in (\ref{Pout_R_b}) is employed and hence the power
allocation optimization problem is reformulated as%
\begin{equation}
\begin{array}{r}
\min \left( \frac{\Gamma \left( p_{_{SR_{b}}}-q_{_{SR_{b}}}\right) }{\Gamma
\left( \alpha _{_{SR_{b}}}\right) \Gamma \left( \beta _{_{SR_{b}}}\right)
q_{_{SR_{b}}}}\left( \frac{\frac{\alpha _{_{SR_{b}}}\beta _{_{SR_{b}}}}{\bar{%
h}_{_{SR_{b}}}}}{\rho _{_{SR_{b}}}P_{M}}\right) ^{q_{_{SR_{b}}}}+\frac{%
\Gamma \left( p_{_{R_{b}D}}-q_{_{R_{b}D}}\right) }{\Gamma \left( \alpha
_{_{R_{b}D}}\right) \Gamma \left( \beta _{_{R_{b}D}}\right) q_{_{R_{b}D}}}%
\left( \frac{\frac{\alpha _{_{R_{b}D}}\beta _{_{R_{b}D}}}{\bar{h}_{_{R_{b}D}}%
}}{\rho _{_{R_{b}D}}P_{M}}\right) ^{q_{_{R_{b}D}}}\right) \\
\multicolumn{1}{c}{\text{subject to }\left\{
\begin{array}{c}
\rho _{_{SR_{b}}}+\rho _{_{R_{b}D}}=1 \\
0<\rho _{_{SR_{b}}}\leq 1 \\
0<\rho _{_{R_{b}D}}\leq 1%
\end{array}%
\right.}%
\end{array}
\label{Sel_Max_pow_all_2}
\end{equation}

\begin{theorem}
The power allocation parameters that minimize the outage probability of a
relay-assisted FSO system when operating in Gamma-Gamma turbulence induced
fading and employing select-max relaying, are given by
\begin{equation}
\rho _{_{SR_{b}}}=\left( \delta _{_{SR_{b}}}t_{0}\right) ^{\frac{1}{%
q_{_{SR_{b}}+1}}}\text{ and }\rho _{_{R_{b}D}}=\left( \delta
_{_{R_{bD}}}t_{0}\right) ^{\frac{1}{q_{_{R_{b}D}+1}}}  \label{Sel_max_coef}
\end{equation}%
for the $S$-$R_{b}$ and $R_{b}$-$D$ links respectively, where $b=1,...,N$, $%
\delta _{_{SR_{b}}}=\frac{\Gamma \left( p_{_{SR_{b}}}-q_{_{SR_{b}}}\right) }{%
\Gamma \left( \alpha _{_{SR_{b}}}\right) \Gamma \left( \beta
_{_{SR_{b}}}\right) }\left( \frac{\alpha _{_{SR_{b}}}\beta _{_{SR_{b}}}}{%
\bar{h}_{_{SR_{b}}}P_{M}}\right) ^{q_{_{SR_{b}}}}$, $\delta _{_{R_{bD}}}=%
\frac{\Gamma \left( p_{_{R_{b}D}}-q_{_{R_{b}D}}\right) }{\Gamma \left(
\alpha _{_{R_{b}D}}\right) \Gamma \left( \beta _{_{R_{b}D}}\right) }\left(
\frac{\alpha _{_{R_{b}D}}\beta _{_{R_{b}D}}}{\bar{h}_{_{R_{b}D}}P_{M}}%
\right) ^{q_{_{R_{b}D}}}$ and $t_{0}\in \left[ 0,\min \left( \frac{1}{\delta
_{_{SR_{b}}}},\frac{1}{\delta _{_{R_{bD}}}}\right) \right] $ is the unique
real positive root of
\begin{equation}
S\left( t\right) =\delta _{_{SR_{b}}}^{\frac{1}{q_{_{SR_{b}}+1}}}t^{\frac{1}{%
q_{_{SR_{b}}+1}}}+\delta _{_{R_{bD}}}^{\frac{1}{q_{_{R_{b}D}+1}}}t^{\frac{1}{%
q_{_{R_{b}D}+1}}}-1.  \label{polyonym}
\end{equation}
\end{theorem}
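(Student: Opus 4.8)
The plan is to treat \eqref{Sel_Max_pow_all_2} as an equality-constrained optimization problem and solve it via the Karush–Kuhn–Tucker (KKT) conditions; the box constraints $0<\rho_{_{SR_b}},\rho_{_{R_bD}}\le 1$ will turn out to be inactive at the optimum (the objective blows up as either $\rho\to 0^+$, and the sum constraint together with positivity forces both variables strictly below $1$), so only the linear constraint $\rho_{_{SR_b}}+\rho_{_{R_bD}}=1$ is binding. Writing $x=\rho_{_{SR_b}}$, $y=\rho_{_{R_bD}}$, the objective becomes $\delta_{_{SR_b}}x^{-q_{_{SR_b}}}+\delta_{_{R_bD}}y^{-q_{_{R_bD}}}$ with $\delta$'s as defined in the statement. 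First I would form $\mathcal{L}=\delta_{_{SR_b}}x^{-q_{_{SR_b}}}+\delta_{_{R_bD}}y^{-q_{_{R_bD}}}+\mu(x+y-1)$ and set $\partial\mathcal{L}/\partial x=\partial\mathcal{L}/\partial y=0$, giving
\[
q_{_{SR_b}}\delta_{_{SR_b}}x^{-q_{_{SR_b}}-1}=\mu=q_{_{R_bD}}\delta_{_{R_bD}}y^{-q_{_{R_bD}}-1}.
\]

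**Extracting the root equation.** From the stationarity conditions I would solve for $x$ and $y$ individually in terms of $\mu$: $x=\bigl(q_{_{SR_b}}\delta_{_{SR_b}}/\mu\bigr)^{1/(q_{_{SR_b}}+1)}$ and similarly for $y$. Introducing the reparametrization $t_0 := (\text{const})/\mu$ — chosen so that the multiplicative constants collapse into exactly $x=(\delta_{_{SR_b}}t_0)^{1/(q_{_{SR_b}}+1)}$ and $y=(\delta_{_{R_bD}}t_0)^{1/(q_{_{R_bD}}+1)}$, matching \eqref{Sel_max_coef} — one then substitutes these expressions into the constraint $x+y=1$. This substitution produces precisely $S(t_0)=0$ with $S$ as in \eqref{polyonym}. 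So the structure of the argument is: KKT $\Rightarrow$ the optimal $(x,y)$ has the stated parametric form, and the constraint pins down $t_0$ as a root of $S$.

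**Uniqueness and the interval for $t_0$.** The remaining work is to show $S$ has a unique positive root and that it lies in $\bigl[0,\min(1/\delta_{_{SR_b}},1/\delta_{_{R_bD}})\bigr]$. I would argue $S$ is continuous and strictly increasing on $(0,\infty)$ (each term $\delta^{1/(q+1)}t^{1/(q+1)}$ is strictly increasing since the exponent $1/(q+1)>0$), with $S(0^+)=-1<0$. For the upper endpoint, evaluate $S$ at $t=\min(1/\delta_{_{SR_b}},1/\delta_{_{R_bD}})$: at this value at least one of the two terms equals $1$ and the other is nonnegative, so $S\ge 0$ there. By the intermediate value theorem and strict monotonicity, $S$ has exactly one root $t_0$ in that interval. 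Strict monotonicity also guarantees this is the \emph{only} positive root, so the KKT point is unique; since the problem is convex (as noted in the paper, an outage probability is convex and the feasible set is a line segment), this unique stationary point is the global minimizer. The main obstacle — really the only delicate point — is bookkeeping the constants so that the clean parametrization in terms of a single $t_0$ emerges and the claimed interval endpoint comes out exactly as $\min(1/\delta_{_{SR_b}},1/\delta_{_{R_bD}})$; everything else is routine calculus.
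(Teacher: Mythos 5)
Your route is the same as the paper's: form the Lagrangian for the equality-constrained problem \eqref{Sel_Max_pow_all_2}, solve the stationarity conditions for each power fraction in terms of the multiplier, and let the constraint $\rho_{SR_b}+\rho_{R_bD}=1$ pin the multiplier down as the root of $S(t)$ in \eqref{polyonym}. Your uniqueness argument (strict monotonicity of $S$ together with the sign change over $[0,\min(1/\delta_{SR_b},1/\delta_{R_bD})]$) is actually cleaner than the paper's, which invokes the intermediate value theorem for existence and Descartes' rule of signs for uniqueness.

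There is, however, one concrete slip, and it is precisely the point you flag as ``delicate bookkeeping'' without resolving it. You transcribe the objective as $\delta_{SR_b}x^{-q_{SR_b}}+\delta_{R_bD}y^{-q_{R_bD}}$, but the objective in \eqref{Sel_Max_pow_all_2} carries the additional factors $1/q_{SR_b}$ and $1/q_{R_bD}$: with the $\delta$'s as defined in the theorem it reads $(\delta_{SR_b}/q_{SR_b})\,x^{-q_{SR_b}}+(\delta_{R_bD}/q_{R_bD})\,y^{-q_{R_bD}}$. With your version, stationarity gives $x=(q_{SR_b}\delta_{SR_b}/\mu)^{1/(q_{SR_b}+1)}$ and $y=(q_{R_bD}\delta_{R_bD}/\mu)^{1/(q_{R_bD}+1)}$, and no single choice $t_0=(\mathrm{const})/\mu$ collapses both to the form \eqref{Sel_max_coef}: matching the first requires $t_0=q_{SR_b}/\mu$ and the second $t_0=q_{R_bD}/\mu$, which are incompatible unless $q_{SR_b}=q_{R_bD}$, and the surplus constants cannot be absorbed into the $\delta$'s because those are fixed by the theorem statement. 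With the correct objective the difficulty disappears: differentiating $(\delta/q)\rho^{-q}$ brings down a factor $q$ that cancels, stationarity reads $\delta_{SR_b}x^{-q_{SR_b}-1}=\mu=\delta_{R_bD}y^{-q_{R_bD}-1}$, and $t_0=1/\mu$ yields \eqref{Sel_max_coef} directly. (The paper's own displayed Lagrangian in \eqref{Langrange} omits the same $1/q$ factors, but its stated solution is the one consistent with the correct objective.) Once you restore those factors, the rest of your argument goes through.
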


\begin{proof}
We first define the Langrangian associated with the optimization problem of (%
\ref{Sel_Max_pow_all_2}) as%
\begin{equation}
\mathcal{J}=\delta _{_{SR_{b}}}\left( \frac{1}{\rho _{_{SR_{b}}}}\right)
^{q_{_{SR_{b}}}}+\delta _{_{R_{bD}}}\left( \frac{1}{\rho _{_{R_{b}D}}}%
\right) ^{q_{_{R_{b}D}}}-\lambda \left( \rho _{_{SR_{b}}}+\rho
_{_{R_{b}D}}\right) .  \label{Langrange}
\end{equation}%
Applying the method of Langrange multipliers, setting $\frac{\partial
\mathcal{J}}{\partial \rho _{_{SR_{b}}}}=0$ and $\frac{\partial \mathcal{J}}{%
\partial \rho _{_{R_{b}D}}}=0$, and using the equality constraint in (\ref%
{Sel_Max_pow_all_2}), it is straightforward to show that the optimum power
allocation coefficients are given by (\ref{Sel_max_coef}), where $t_{0}=-%
\frac{1}{\lambda }$ is the root of $S\left( t\right) $, which has to take
values in the interval $\left[ 0,\min \left( \frac{1}{\delta _{_{SR_{b}}}},%
\frac{1}{\delta _{_{R_{bD}}}}\right) \right] $ in order to satisfy the
inequality constraints in (\ref{Sel_Max_pow_all_2}). It should be noted that
$S\left( t\right) $ has a unique positive root which lies in this interval;
this can be proved by applying the intemediate value theorem of continuous
functions that shows that $S\left( t\right) $ has at least one real positive
root in the interval $\left[ 0,\min \left( \frac{1}{\delta _{_{SR_{b}}}},%
\frac{1}{\delta _{_{R_{bD}}}}\right) \right] $ and the Descartes' rule of
signs \cite{J:And} which shows that this root is unique.
\end{proof}

In order to avoid finding the root of the polynomial $S\left( t\right) $ and
motivated by the dependence of the outage probability on the link distance,
the following suboptimal power allocation scheme for select-max relaying is
proposed.

\begin{proposition}
The fraction of the total optical power which is allocated to each link when
the select-max relaying scheme is employed is chosen as%
\begin{equation}
\rho _{_{SR_{b}}}=\frac{d_{_{SR_{b}}}}{d_{_{SR_{b}}}+d_{_{R_{b}D}}}\text{
and }\rho _{_{R_{b}D}}=\frac{d_{_{R_{b}D}}}{d_{_{SR_{b}}}+d_{_{R_{b}D}}}
\label{Sel_Max_coef_2}
\end{equation}%
for the $S$-$R_{b}$ and $R_{b}$-$D$ links, respectively, with $b=1,...,N$.
\end{proposition}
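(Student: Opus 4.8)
The plan is to justify (\ref{Sel_Max_coef_2}) not as the true optimum of (\ref{Sel_Max_pow_all_1}) --- that optimum is already the root-of-polynomial solution (\ref{Sel_max_coef}) --- but as a closed-form, turbulence-agnostic surrogate for it that retains only the quantity most readily available in practice, the link lengths. First I would note that the per-relay problem (\ref{Sel_Max_pow_all_1}) decouples over $b$, so it suffices to argue for a single active relay. Then, starting from the asymptotic objective in (\ref{Sel_Max_pow_all_2}), I would record that each hop contributes a term of the form $c_{_{AB}}\,\rho_{_{AB}}^{-q_{_{AB}}}$ with $c_{_{AB}}\propto\bar{h}_{_{AB}}^{-q_{_{AB}}}$, and that eliminating the multiplier from the stationarity conditions of the associated Lagrangian (\ref{Langrange}) shows that the optimal way of splitting the unit budget between the two hops is controlled entirely by the ratio of the two hop cost coefficients $\delta_{_{SR_{b}}},\delta_{_{R_{bD}}}$ and by the orders $q_{_{SR_{b}}},q_{_{R_{b}D}}$.

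The next step would be to trace how that ratio varies with distance. By (\ref{geom_loss}) the path-loss gain $\bar{h}_{_{AB}}$ is strictly decreasing in the hop length $d_{_{AB}}$, through both the Beer--Lambert factor $\exp(-v d_{_{AB}})$ and the beam-spread factor $(D_{T}+\theta_{T}d_{_{AB}})^{-2}$, while the Rytov variance $\sigma_{R}^{2}\propto d_{_{AB}}^{11/6}$ grows with $d_{_{AB}}$, so a longer hop is simultaneously more attenuated and more severely faded; both effects enlarge $\delta_{_{AB}}$ over the weak-to-moderate turbulence range of interest. Hence the asymptotic outage contribution of a hop increases with its length, and the minimizer of (\ref{Sel_Max_pow_all_2}) must shift optical power toward the longer hop. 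I would then argue that, among allocations respecting $\rho_{_{SR_{b}}}+\rho_{_{R_{b}D}}=1$ that are scale-free and monotone in distance, the simplest choice is to split the power in direct proportion to the hop lengths, which is exactly (\ref{Sel_Max_coef_2}); this parallels the heuristic behind the all-active rule (\ref{dist_rule_all_act}).

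The hard part will be that ``proportional to distance'' is not literally forced by the optimality conditions: $d_{_{AB}}$ enters $\bar{h}_{_{AB}}$ through a product of an exponential and an algebraic term and enters the exponent $q_{_{AB}}$ through a fractional power, so the exact coefficients (\ref{Sel_max_coef}) are transcendental in the link lengths and also depend on $C_{n}^{2}$, $v$, $\lambda$ and the aperture geometry. I would therefore make the approximation sharp only in a limiting regime --- for instance when the two hops see comparable turbulence statistics, so that $q_{_{SR_{b}}}\approx q_{_{R_{b}D}}=q$ and the stationarity condition collapses to $\rho_{_{SR_{b}}}/\rho_{_{R_{b}D}}=(\bar{h}_{_{R_{b}D}}/\bar{h}_{_{SR_{b}}})^{q/(q+1)}$, whose right-hand side, like the proposed rule, increases with $d_{_{SR_{b}}}$ and decreases with $d_{_{R_{b}D}}$ --- and outside such regimes I would defer to the numerical comparison against the true optimum of (\ref{Sel_Max_pow_all_2}) reported in Section \ref{NR} to confirm that (\ref{Sel_Max_coef_2}) stays close to it.
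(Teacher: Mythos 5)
Your proposal is sound, and in fact it supplies more justification than the paper does: the paper states this Proposition with no proof at all, presenting the distance-proportional split purely as a heuristic ``motivated by the dependence of the outage probability on the link distance'' and introduced to avoid solving for the root of $S(t)$ in (\ref{polyonym}); its only validation is the numerical comparison in Section \ref{NR}. You correctly recognize that (\ref{Sel_Max_coef_2}) is a definition of a suboptimal scheme rather than a claim of optimality, so the only things that genuinely need checking are feasibility (the two coefficients are positive and sum to one, satisfying the constraints of (\ref{Sel_Max_pow_all_1})) and the qualitative direction of the rule. Your directional argument is also correct: from (\ref{geom_loss}) a longer hop has smaller $\bar{h}$, and the larger Rytov variance reduces $q=\min(\alpha,\beta)$, both of which inflate the corresponding $\delta$ in (\ref{Sel_max_coef}) and push the optimum of (\ref{Sel_Max_pow_all_2}) toward giving the longer hop more power; your equal-$q$ specialization $\rho_{SR_b}/\rho_{R_bD}=(\bar{h}_{R_bD}/\bar{h}_{SR_b})^{q/(q+1)}$ follows correctly from the Lagrangian stationarity conditions. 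Your honest caveat that proportionality to distance is not forced by the optimality conditions, with deferral to the numerics, is exactly the status the claim has in the paper.
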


\subsection{Power Allocation in DSSC Protocol}

Since a single relay is activated in each transmission slot by the DSSC
protocol, the optimum power allocation scheme is obtained by minimizing the
outage probability of the active end-to-end path. Hence, the optimization
problem that has to be solved in this case is equivalent to the problem in (%
\ref{Sel_Max_pow_all_2}) and, therefore, the optimum and the suboptimum
power allocation schemes of the select-max protocol can also be employed for
DSSC relaying.

\section{Numerical Results and Discussion\label{NR}}

In this section, we illustrate numerical results for the outage performance
of the considered relaying protocols, using the derived analytical
expressions. In the following, we consider a relay-assisted FSO system with $%
\lambda =1550$ nm and transmit and receive aperture diameters of $%
D_{R}=D_{T}=20$ cm. Furthermore, we assume clear weather conditions with
visibility of $10$ km, which correspond to a weather-dependent attenuation
coefficient of $v=0.1$ $\frac{1}{\text{km}}$ and an index of refraction
structure parameter of $C_{n}^{2}=2\times 10^{-14}$ m$^{-\frac{2}{3}}$.

Fig. \ref{Fig:1} depicts the outage performance of the presented relaying
protocols for various numbers of relays, when the link distance is identical
for all $S$-$R_{i}$ and $R_{i}$-$D$ links and the optical power is equally
divided between the active relays. Specifically, analytical results for the
outage probability of a relay-assisted FSO system with a link distance of $2$
km are plotted, as functions of the power margin for $N=2,3,4$ relays, using
the exact and the asymptotic outage expressions for each of the considered
relaying protocols. We assumed $\rho _{_{SR_{i}}}=\rho _{_{R_{i}D}}=\frac{1}{%
2N}$ for the all-active and $\rho _{_{SR_{i}}}=\rho _{_{R_{i}D}}=\frac{1}{2}$
for the select-max and DSSC protocols, respectively. As benchmarks, Monte
Carlo (MC) simulation results and the performance of an FSO system with $N=1$%
, which is independent of the employed relaying protocol, are also
illustrated in Fig. \ref{Fig:1}. As can be observed, there is an excellent
match between simulation and analytical results for every value of $N$,
verifying the presented theoretical analysis. Moreover, it is obvious that
the select-max relaying scheme has a better performance compared to the
all-active scheme in every case examined (performance gains of 2, 4, and 5
dB are observed for $N=2$, $3,$ and $4,$ respectively). This result is
intuitively pleasing, since the select-max protocol selects in each
transmission slot the best end-to-end path out of the $N$ available paths
and allocates the total available optical power only to this path.
Furthermore, when increasing the number of relays in the select-max and
all-active protocols, it is observed that the outage performance is
significantly improved with respect to the single relay FSO system. In
contrast, although the DSSC scheme with the optimum threshold offers
significant performance improvement for $N=2$ (its performance is identical
with the select-max performance of $N=2$), it remains unaffected by the
increase of the number of relays.

Fig. \ref{Fig:Comparison1} depicts the outage performance of a
relay-assisted FSO system employing the presented protocols and assuming
different distances for each of the $S$-$R_{i}$ and $R_{i}$-$D$ FSO links.
Specifically, two different system configurations are investigated. In the
first system configuration, $N=2$ and the link distances are given by
vectors $d_{_{SR}}=\left\{ 2,1.5\right\} $ and $d_{_{RD}}=\left\{
1,2.5\right\} $, with the elements of the vectors respresenting the
distances (in km) of the $S$-$R_{i}$ and $R_{i}$-$D$ links respectively,
while in the second configuration $N=3$, and the link distances are given by
$d_{_{SR}}=\left\{ 2,1.5,1\right\} $ and $d_{_{RD}}=\left\{ 1,2.5,3\right\} $%
. Fig. \ref{Fig:Comparison1} reveals that the select-max relaying scheme
offers significant performance gains compared to the all-active scheme, also
for non-equal link distances. In particular, in the first configuration a
gain of 2.5 dB compared to the all-active scheme is offered, while in the
second configuration the offered gain is 3 dB. Furthermore, it can be easily
observed that although in the second configuration the number of relays has
been increased and the performance of both all-active and select-max
relaying has been improved, DSSC with optimized threshold remains unaffected
by this increase and its performance remains identical with the performance
for the first configuration. This was expected, since DSSC uses only two
end-to-end paths (those with the minimum end-to-end distance) and,
therefore, the addition of extra paths with larger end-to-end distance will
not improve the performance of this protocol. Finally, we note that
simulation and analytical results are again in excellent agreement.

Fig. \ref{Fig:pow_alloc} illustrates the effect of power allocation in
relay-assisted FSO systems employing the relaying protocols under
consideration. Specifically, the performaces of the optimum and the proposed
sub-optimum power allocation schemes, obtained by solving (\ref%
{all_act_pow_alloc2}), (\ref{Sel_Max_pow_all_2}) and from the empirical
rules of (\ref{dist_rule_all_act}), (\ref{Sel_Max_coef_2}), respectively,
are presented along with the equal power allocation, when the second system
configuration of Fig. \ref{Fig:Comparison1} is considered. It is obvious
from Fig. \ref{Fig:pow_alloc} that optimized power allocation offers
significant performance gains compared to equal power allocation,
irrespective of the employed relaying protocol. This was expected, since
both the path loss and turbulence strength are distance-dependent in FSO
links, and, hence, power allocation schemes that take into consideration the
distances of the underlying links, can significantly improve system
performance. Furthermore, it is observed that even the simple sub-optimum
power allocations schemes lead to substancial performance improvements
compared to equal power allocation. Taking into consideration that the
parameters for these schemes can be easily obtained, based only on the link
distances (for most practical FSO applications, the link distances are fixed
and, thus, are known a-priori at both the transmitter and relays), the
proposed sub-optimum power allocation can be considered as a less complex
alternative to optimum power allocation.

\section{Conclusions\label{Con}}

We investigated several transmission protocols for relay-assisted FSO
systems without direct link between the source and the destination for the
Gamma-Gamma channel model. Alternative protocols to the all-active relaying
scheme were proposed, which activate only a single relay in each
transmission slot. Thus, considerable benefits in terms of implementation
complexity are resulted, since the need for synchronizing the relays'
transmissions in order for the FSO signals to arrive simultaneously at the
destination is avoided. In particular, two different types of relay
selection protocols were proposed: select-max and DSSC. Select-max relaying
offers significant performance gains compared to the all-active scheme at
the expense of requiring the CSI by all the available links. In contrast,
DSSC relaying requires less CSI than select-max (only from the links used in
the previous transmission slot), however it exploits only the two relays
with the minimum end-to-end distance. Furthermore, based on the derived
outage probability expressions, the problem of allocating the power
resources to the FSO links was addressed, and optimum and sub-optimum
solutions that minimize the system's outage probability were derived for
each considered relaying protocol. Numerical results were provided, which
clearly demonstrated the improvements in the power efficiency offered by the
proposed power allocation schemes.

\section*{Appendix I}

Using the infinite series representation of the Gamma-Gamma pdf \cite[Eqs.
(7), (8)]{J:SchoberTCOM_GG} and since $P_{out,AB}=\int_{0}^{\frac{1}{\bar{h}%
_{_{AB}}\rho _{AB}P_{M}}}f_{\tilde{h}_{_{AB}}}\left( x\right) dx$, the
outage probability for the FSO link between terminals $A$ and $B$ can be
obtained after some basic algebraic manipulations, as%
\begin{equation}
P_{out,AB}=\frac{\frac{\pi }{\sin \left( \pi \left( \alpha _{_{AB}}-\beta
_{_{AB}}\right) \right) }}{\Gamma \left( \alpha _{_{AB}}\right) \Gamma
\left( \beta _{_{AB}}\right) }\sum_{l=0}^{\infty }\left( \frac{\frac{1}{%
l!\left( \beta _{_{AB}}+l\right) }\left( \frac{\alpha _{_{AB}}\beta _{_{AB}}%
}{\bar{h}_{_{AB}}\rho _{AB}P_{M}}\right) ^{\beta _{_{AB}}+l}}{\Gamma \left(
l-\alpha _{_{AB}}+\beta _{_{AB}}+1\right) }-\frac{\frac{1}{l!\left( \alpha
_{_{AB}}+l\right) }\left( \frac{\alpha _{_{AB}}\beta _{_{AB}}}{\bar{h}%
_{AB}\rho _{AB}P_{M}}\right) ^{\alpha _{_{AB}}+l}}{\Gamma \left( l+\alpha
_{_{AB}}-\beta _{_{AB}}+1\right) }\right) .  \label{Pout_SISO_inf}
\end{equation}%
For high values of power margin, i.e., $P_{M}\rightarrow \infty $, the term
for $l=0$ is dominant and hence (\ref{Pout_SISO_inf}) can be approximated by%
\begin{equation}
P_{out,AB}=\frac{\frac{\pi }{\sin \left( \pi \left( \alpha _{_{AB}}-\beta
_{_{AB}}\right) \right) }}{\Gamma \left( \alpha _{_{AB}}\right) \Gamma
\left( \beta _{_{AB}}\right) }\left( \frac{\left( \frac{\alpha _{_{AB}}\beta
_{_{AB}}}{\bar{h}_{_{AB}}\rho _{AB}P_{M}}\right) ^{\beta _{_{AB}}}}{\Gamma
\left( 1-\alpha _{_{AB}}+\beta _{_{AB}}\right) \beta _{_{AB}}}-\frac{\left(
\frac{\alpha _{_{AB}}\beta _{_{AB}}}{\bar{h}_{_{AB}}\rho _{AB}P_{M}}\right)
^{\alpha _{_{AB}}}}{\Gamma \left( 1+\alpha _{_{AB}}-\beta _{_{AB}}\right)
\alpha _{_{AB}}}\right) ,
\end{equation}%
which can be reduced to (\ref{Pout_asympt}), after using the Euler's
reflection formula \cite[Eq. (8.334.3)]{B:Gra_Ryz_Book} and\ introducing $%
p_{_{AB}}$ and $q_{_{AB}}$. This concludes the proof.

\section*{Appendix II}

Based on the infinite series representation of the Gamma-Gamma pdf \cite[%
Eqs. (7), (8)]{J:SchoberTCOM_GG}, the pdf of $\xi _{m}=\rho
_{_{_{R_{m}D}}}h_{_{R_{m}D}}$ can be written as%
\begin{equation}
f_{_{\xi _{m}}}\left( x\right) =\frac{\left( \frac{\alpha _{_{R_{m}D}}\beta
_{_{R_{m}D}}}{\bar{h}_{_{R_{m}D}}\rho _{_{_{R_{m}D}}}}\right)
^{q_{_{R_{m}D}}}\Gamma \left( p_{_{R_{m}D}}-q_{_{R_{m}D}}\right) }{\Gamma
(\alpha _{_{R_{m}D}})\Gamma (\beta _{_{R_{m}D}})}x^{q_{_{R_{m}D}}-1}+O\left(
x^{q_{_{R_{m}D}}}\right) ,
\end{equation}%
where $O\left( \cdot \right) $ represents the least significant terms of an
infinite series as $x\rightarrow \infty $. By taking the Laplace transform
of the above equation, the MGF expression of $\xi _{m}$ can be obtained as%
\begin{equation}
\mathcal{M}_{_{\xi _{m}}}\left( s\right) =\frac{\left( \frac{\alpha
_{_{R_{m}D}}\beta _{_{R_{m}D}}}{\bar{h}_{_{R_{m}D}}\rho _{_{_{R_{m}D}}}}%
\right) ^{q_{_{R_{m}D}}}\Gamma \left( p_{_{R_{m}D}}-q_{_{R_{m}D}}\right) }{%
\Gamma (\alpha _{_{R_{m}D}})\Gamma (\beta _{_{R_{m}D}})}s^{-q_{_{R_{m}D}}}+O%
\left( s^{-q_{_{R_{m}D}}-1}\right) .
\end{equation}%
Hence, the MGF for $h_{_{S\left( n\right) }}=\sum\limits_{m\in S\left(
n\right) }\rho _{_{_{R_{m}D}}}h_{_{R_{m}D}}$ can be written as
\begin{equation}
\mathcal{M}_{_{h_{_{S\left( n\right) }}}}\left( s\right)
=s^{-\sum\limits_{m\in S\left( n\right) }q_{_{R_{m}D}}}\prod\limits_{m\in
S\left( n\right) }\frac{\left( \frac{\alpha _{_{R_{m}D}}\beta _{_{R_{m}D}}}{%
\bar{h}_{_{R_{m}D}}\rho _{_{_{R_{m}D}}}}\right) ^{q_{_{R_{m}D}}}\Gamma
\left( p_{_{R_{m}D}}-q_{_{R_{m}D}}\right) }{\Gamma (\alpha
_{_{R_{m}D}})\Gamma (\beta _{_{R_{m}D}})}+O\left( s^{-\sum\limits_{m\in
S\left( n\right) }q_{_{R_{m}D}}-1}\right) .  \label{mgf_sum}
\end{equation}%
and, by taking the inverse Laplace transform of (\ref{mgf_sum}), an
expression for the pdf of $h_{_{S\left( n\right) }}$ is obtained as
\begin{equation}
f_{_{h_{_{S\left( n\right) }}}}\left( x\right) =\frac{\prod\limits_{m\in
S\left( n\right) }\left( \frac{\alpha _{_{R_{m}D}}\beta _{_{R_{m}D}}}{\bar{h}%
_{_{R_{m}D}}\rho _{_{_{R_{m}D}}}}\right) ^{q_{_{R_{m}D}}}\frac{\Gamma \left(
q_{_{R_{m}D}}\right) \Gamma \left( p_{_{R_{m}D}}-q_{_{R_{m}D}}\right) }{%
\Gamma \left( \alpha _{_{R_{m}D}}\right) \Gamma \left( \beta
_{_{R_{m}D}}\right) }}{\Gamma \left( \sum\limits_{m\in S\left( n\right)
}q_{_{R_{m}D}}\right) }x^{\left( \sum\limits_{m\in S\left( n\right)
}q_{_{R_{m}D}}\right) -1}+O\left( x^{\sum\limits_{m\in S\left( n\right)
}q_{_{R_{m}D}}}\right)  \label{pdf_sum}
\end{equation}%
After some basic algebraic manipulations and keeping only the dominant term,
the asymptotic expression in (\ref{cdf_sum}) is obtained for the cdf of $%
h_{_{S\left( n\right) }}$. This concludes the proof.

\bibliographystyle{IEEEtran}
\bibliography{IEEEabrv,References}
\newpage
\begin{figure}[tbp]
\centering\includegraphics[keepaspectratio,width=7in]{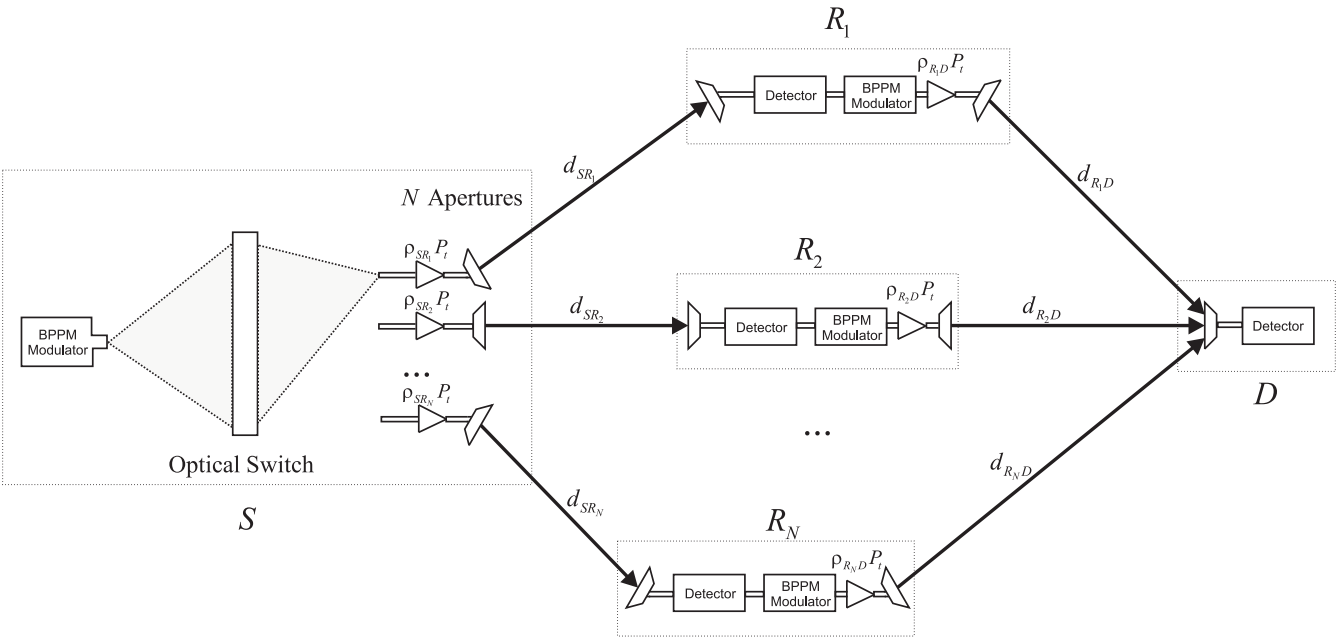}
\caption{The relay-assisted FSO system under consideration.}
\label{Fig:system}
\end{figure}
\newpage
\begin{figure}[tbp]
\centering\includegraphics[keepaspectratio,width=6in]{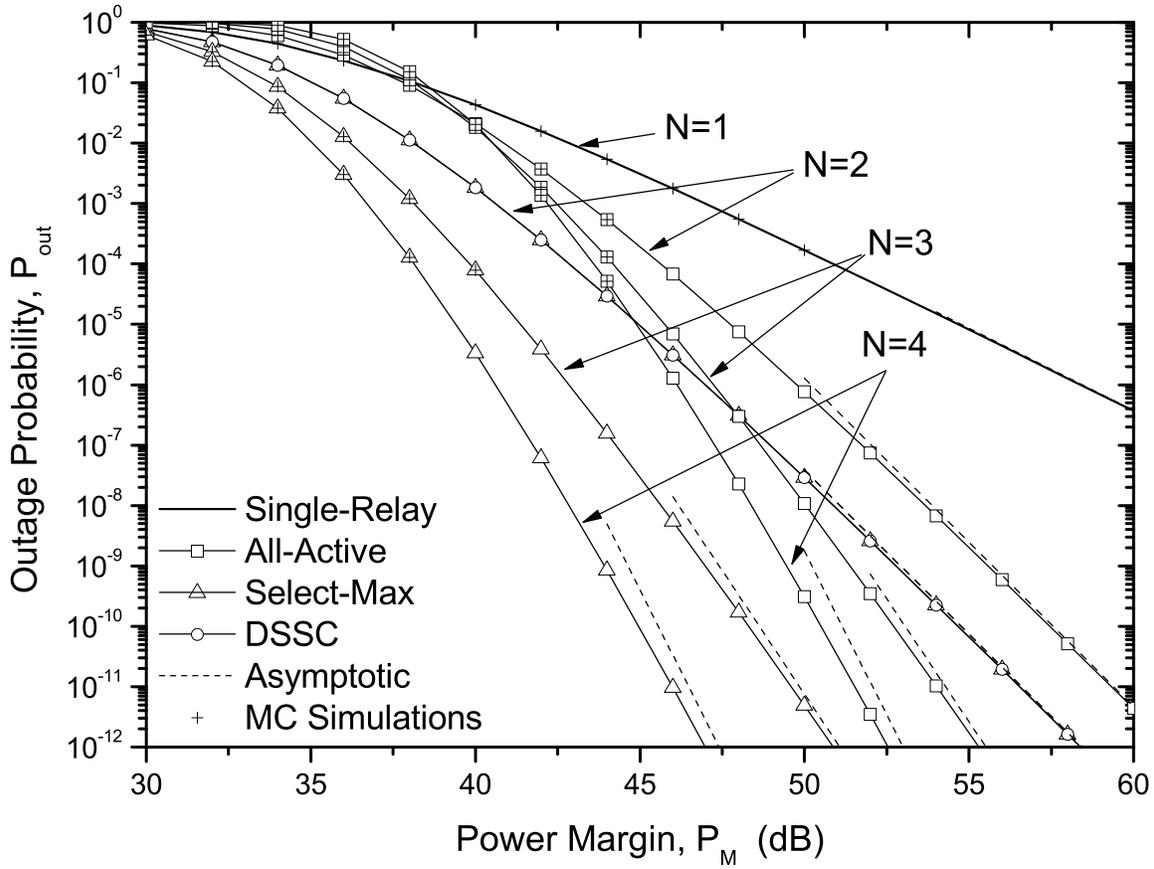}
\caption{Comparison of relaying protocols for a relay-assisted FSO system
with $d_{_{SR_{i}}}=d_{_{R_{i}D}}=2$ km, $i\in \left\{ 1,...,N\right\} $.}
\label{Fig:1}
\end{figure}
\newpage
\begin{figure}[tbp]
\centering\includegraphics[keepaspectratio,width=6in]{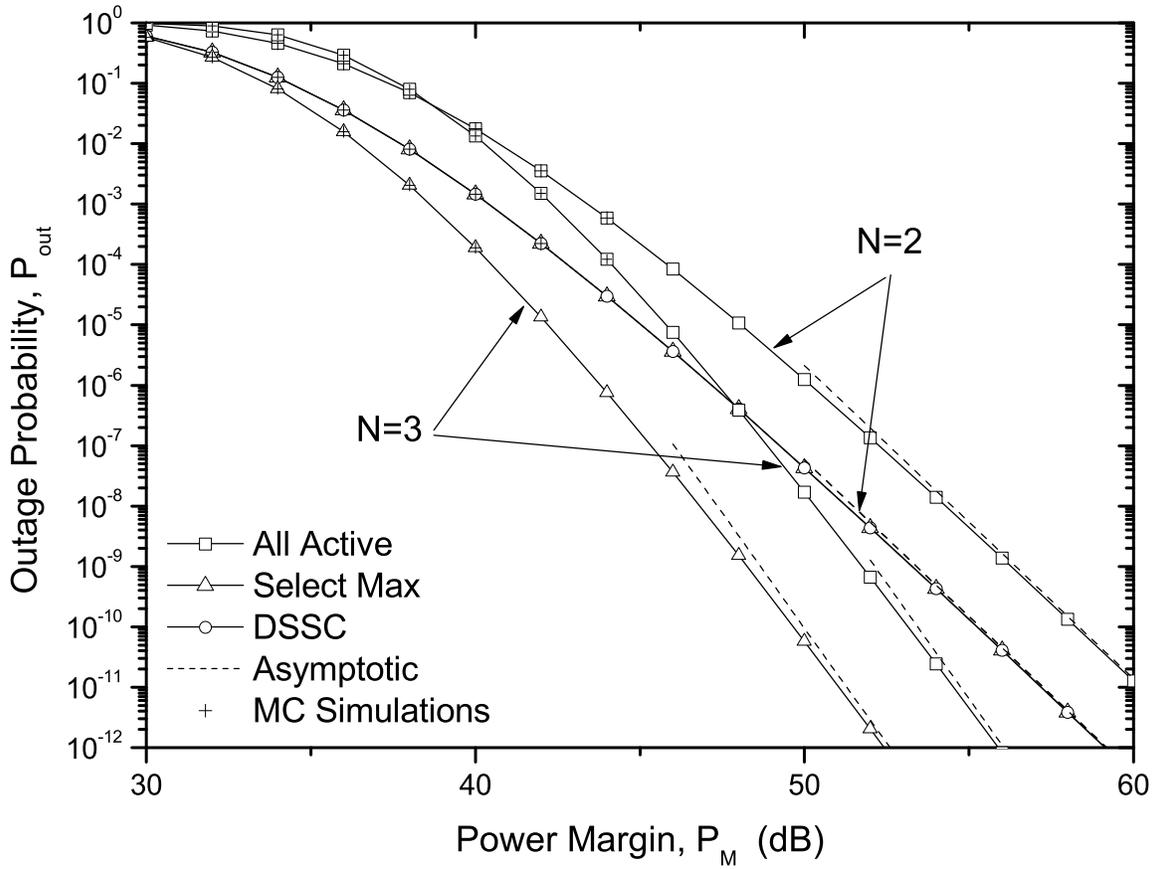}
\caption{Comparison of relaying protocols for different relay-assisted FSO
configurations: $N=2$, $d_{_{SR}}=\left\{ 2,1.5\right\} $, $%
d_{_{RD}}=\left\{ 1,2.5\right\} $ (in km) and $N=3$, $d_{_{SR}}=\left\{
2,1.5,1\right\} $, $d_{_{RD}}=\left\{ 1,2.5,3\right\} $ (in km).}
\label{Fig:Comparison1}
\end{figure}
\newpage
\begin{figure}[tbp]
\centering\includegraphics[keepaspectratio,width=6in]{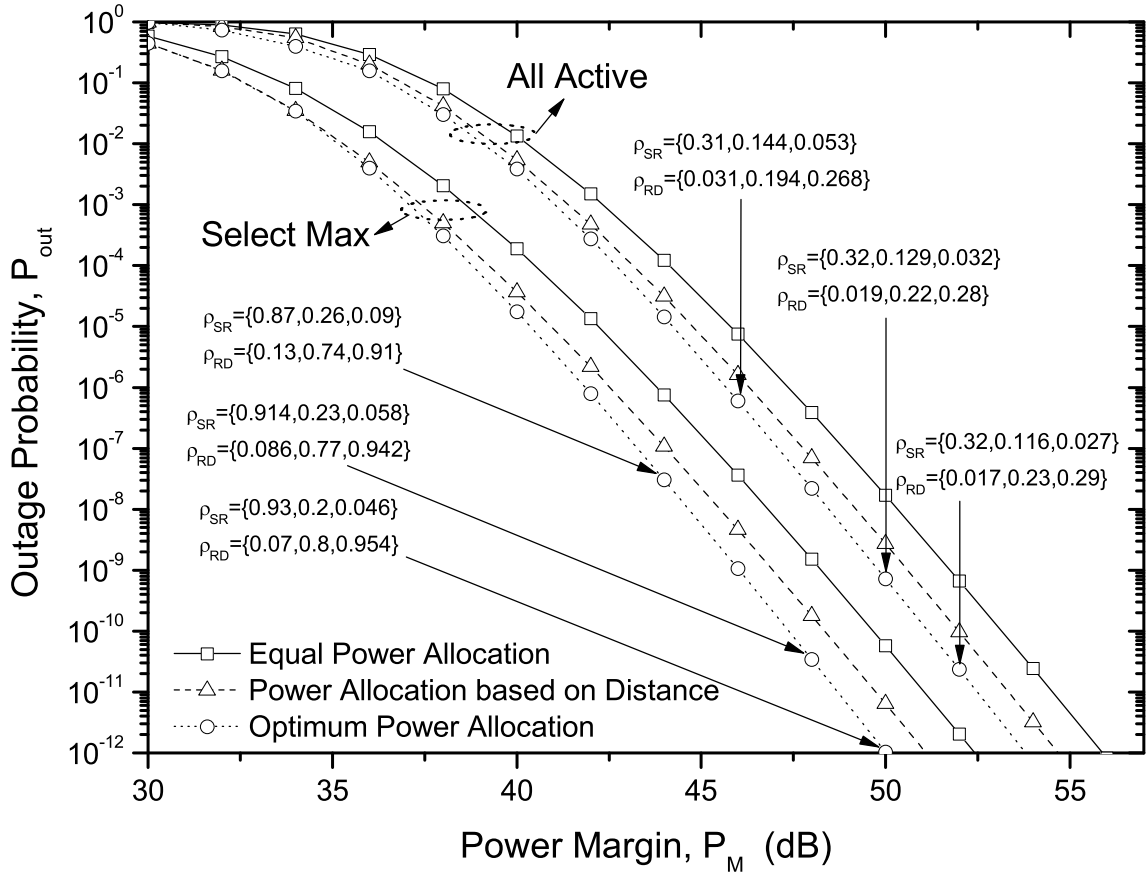}
\caption{Comparison of power allocation schemes for the relaying protocols
under consideration.}
\label{Fig:pow_alloc}
\end{figure}

\end{document}